\newtheorem{theorem}{Theorem}[section]
\newtheorem{lemma}[theorem]{Lemma}
\newtheorem{proposition}[theorem]{Proposition}
\newtheorem{assumption}[theorem]{Assumption}
\newtheorem{definition}[theorem]{Definition}
\newenvironment{proof}[1][Proof]{\vspace{1em}\begin{trivlist}
\item[\hskip \labelsep {\bfseries #1}]}{\hfill $\Box$\end{trivlist}\vspace{1em}}
\newcommand\independent{\protect\mathpalette{\protect\independenT}{\perp}}
\def\independenT#1#2{\mathrel{\rlap{$#1#2$}\mkern5mu{#1#2}}}
\newcommand{\pr}{{\mathbb P}} 
\newcommand{\expt}{{\mathbb E}} 
\tikzset{
  treenode/.style = {inner sep=0pt, text centered,
    font=\sffamily},
  arn_n/.style = {treenode, circle, white, font=\sffamily\bfseries, draw=black,
    fill=black, text width=1em},
}
\begin{document}

\begin{frontmatter}

\title{Perturbation graphs, invariant causal prediction and causal relations in psychology}
\author{Lourens Waldorp, }\ead{waldorp@uva.nl}
\author{Jolanda Kossakowski, }
\author{Han L. J. van der Maas}
\address{University of Amsterdam, Nieuwe Achtergracht 129-B, 1018 NP, the Netherlands}

\begin{abstract}
Networks (graphs) in psychology are often restricted to settings without interventions. Here we consider a framework borrowed from biology that involves multiple interventions from different contexts (observations and experiments) in a single analysis. The method is called perturbation graphs. In gene regulatory networks, the induced change in one gene is measured on all other genes in the analysis, thereby assessing possible causal relations. This is repeated for each gene in the analysis. A perturbation graph leads to the correct set of causes (not necessarily direct causes). Subsequent pruning of paths in the graph (called transitive reduction) should reveal direct causes. We show that transitive reduction will not in general lead to the correct underlying graph. We also show that invariant causal prediction is a generalisation of the perturbation graph method, where including additional variables does reveal direct causes, and thereby replacing transitive reduction. We conclude that perturbation graphs provide a promising new tool for experimental designs in psychology, and combined with invariant causal prediction make it possible to reveal direct causes instead of causal paths. As an illustration we apply the perturbation graphs and invariant causal prediction to a data set about attitudes on meat consumption and to a time series of a patient diagnosed with major depression disorder. 
\end{abstract}
\begin{keyword}
directed graph, graphical model, causal graph, invariant prediction, perturbation graphs, transitive reduction
\end{keyword}

\end{frontmatter}
\endNoHyper
\newpage
\section{Introduction}\noindent
Although networks have become increasingly popular in psychology \citep{Borsboom:2011,Epskamp:2016,Borsboom:2017,Dalege:2017,Marsman:2018,Kan:2019,Waldorp:2019}, the connections between psychological variables are obtained from observational (i.e., no intervention) data \citep[but see][for an interesting alternative]{Blanken:2020}. 
However, in psychology it is commonplace to  perform controlled experiments in which  different conditions, including control or placebo conditions, are compared. 
Methods that make use of both experimental and observational data in the construction of psychological networks would therefore be a useful addition to the current instrumentarium. Here we investigate methods developed in biology for gene regulatory networks, where networks are constructed by combining different contexts, i.e., observation and intervention contexts. 

One popular method to obtain directed networks in biology involves the aggregation of observation (wild type) data and data from different experiments \citep{Markowetz:2005b,Markowetz:2007,Klamt:2010,Pinna:2010,Pinna:2013}. In each experiment one of the nodes is perturbed (intervened on) and then the effects on other nodes is measured. These effects on other nodes are assessed by determining non-zero correlations between nodes using both observation and intervention data combined (or mean differences between observation and intervention conditions using $t$-scores). The resulting graph is called a perturbation graph, where each edge represents a directed path that implies that changing the starting variable of the path leads to a change in the distribution at the end variable of the path. In the next step, the correlations are used to determine which of the of these paths could be direct connections. This step is called transitive reduction, named after a method in graph theory \citep{Aho:1972}. The two-step procedure results in a graph that is expected to represent direct (i.e., direct causal) effects between nodes. 

An example with three variables may help elucidate the idea of perturbation graphs. Suppose we have variables $s$, $u$ and $t$. Although we do not know this, we assume it is true that $s$ causes $u$ and $u$ causes $t$. Then if we were to intervene on $s$ we would learn that both $u$ and $t$ would change in distribution. Likewise, intervening on $u$ would lead to $t$ having a change in distribution but not $s$, and intervening on $t$ would not lead to any change. So, a perturbation graph tells us which variables are effected by interventions. Then transitive reduction, the subsequent step, is meant to determine that there is no direct path from $s$ to $t$, but only through $u$. (See sections \ref{sec:cond-cor} and \ref{sec:transitive-reduction} for details.)

In psychological research we can also implement such a procedure. 
For example, we could measure attitudes on eating meat (e.g., ``The production of meat is harmful for the environment'') and then present hypothetical scenarios (e.g., ``The meat and dairy industry has a huge CO-2 emission and is therefore harmful for the environment.'') and measure the same set of attitudes again. And so on, for each of the different attitudes \citep{Hoekstra:2018}. We get a collection of data sets where each of the variables has been intervened on by hypothetical scenarios, including observation without an intervention. By determining the conditional correlations and pruning some of the apparently superfluous connections (transitive reduction), we then obtain a causal graph. 

This two-step procedure, creating a perturbation graph and pruning away superfluous connections aligns well with psychology research in at least two ways. First, as in traditional research design in psychology, a control (observational) group is compared to (combined with) an experimental group in a single analysis. In the perturbation graph exactly such results are visualised, but then of multiple such experiments (comparisons). Second, the results of a perturbation graph allow for confirmatory research; hypotheses about the effects can be drawn up beforehand, upon which they can be tested. The strength of using the perturbation graph method is that by using multiple variables and considering multiple interventions at once, more information about possible pathways between the variables (mechanistic like) is available. 

Unfortunately, it turns out that both steps, creating the perturbation graph and pruning connections (transitive reduction), will not in general lead to the correct set of direct causes (we prove this in Sections \ref{sec:cond-cor} and \ref{sec:transitive-reduction}). In the first step, to arrive at a perturbation graph, observational and experimental data are pooled, as if these settings have similar properties. For nodes $s$ and $t$ an edge $s\to t$ in a perturbation graph represents the fact that some intervention on $s$ caused a change in $t$. We show, however, with a counter example, that using the pooled information, as suggested for perturbation graphs, does not always lead to the correct decision that there is an effect (we prove this in Section \ref{sec:cond-cor}). Furthermore, the second step, transitive reduction, may also lead to false conclusions. In the perturbation graph it is unclear if $s$ is a direct cause of $t$ or if there is a path $s\to\cdots\to t$. The transitive reduction step tries to resolve this by removing the direct connection $s\to t$ if its coefficient is small compared to all the coefficients along another path $s\to\cdots\to t$. We show that this need not hold in all cases, and so there is no guarantee to obtain the correct direct causes (we prove this in Section \ref{sec:trans-red}). The methodology of the perturbation graph is still useful though. We show that the perturbation graph method is a special case of a more general framework called invariant causal prediction \citep{Peters:2015}. Our contribution is to bring together the methodology of the perturbation graph, which aligns well with psychology research, and invariant causal prediction, which is a method for causal discovery.

Invariant causal prediction revolves around the idea that if we have the correct set of direct causal nodes then the conditional distribution of the node given its set of direct causes will be the same (i.e., is invariant) irrespective of any intervention on the non-direct causes \citep{Peters:2015,Meinshausen:2016,Magliacane:2016,Mooij:2020a}. This idea involves the seminal concept of considering different sets of nodes for the conditional distributions to obtain a causal graph, introduced by, e.g., \citet{Blalock:1960}, \citet{Spirtes:1993}, \citet{Pearl:1992}, \citet{Lauritzen:2001}, \citet{Lauritzen:2002},  \citet{Hyttinen:2012} and \citet{Tillman:2014}. For each context (observation or intervention) all sets of possible direct causes are considered and it is determined for each of those sets whether they are invariant across different contexts. 
It was shown that under relatively mild conditions (see Section \ref{sec:cond-inv-pred}), the correct set of direct causes can be obtained by this method \citep{Peters:2015,Meinshausen:2016}. The assumptions required in \citet{Peters:2015} to obtain the correct graph are that (a) in each setting (observational and experimental) the causal relations are the same, (b) the relations between the variables are linear, and (c) the error terms of the variables are independent of each other. Although these assumptions may appear strong, both (b) and (c) may be relaxed. We discuss these assumptions more elaborately in Section \ref{sec:graphs}. 

The advantages of using perturbation graphs with invariant causal prediction are: (1) multiple variables are considered as causes and effects using different contexts (experiment and observation), (2) variables can be of mixed type (e.g., continuous and discrete), and (3) direct causes may be revealed instead of only causal paths. We will illustrate this with an example data set on meat consumption and a time series of a single patient diagnosed with major depression disorder in \hyperref[sec:application]{Section \ref*{sec:application}}.

\subsection{Contribution and relation to other work}\noindent
In this work we focus on the idea that we want to combine observational and experimental data in a single analysis to improve estimation of the network, the direct causes between several variables. Our contribution is to bring together the work of perturbation graphs, mostly known in biology, and the work of causal discovery in statistics and machine learning. Specifically, we show that a perturbation graph as defined in the biology literature is a marginal version of the causal discovery version of invariant causal prediction. This framework of invariant causal prediction is necessary, since the two-step procedure of the standard perturbation graph and transitive reduction cannot guarantee a correct solution (we prove this in Sections \ref{sec:cond-cor} and \ref{sec:transitive-reduction}).

Several variations of algorithms to perform the transitive reduction step have been proposed. For instance, \citet{Rice:2005} use the so-called conditional correlation, which is a pairwise correlation obtained by aggregating observational (wild-type) and experimental (manipulation) data (we discuss this in detail in Section \ref{sec:cond-cor}). This idea was extended and improved by \citet{Klamt:2010}, where additional constraints on retaining edges were imposed. \citet{Markowetz:2005b} proposed the so-called nested effects model, where effects of genes on other genes are mediated by the up- or downstream role of genes in the network. Other suggestions include determining the dynamic structure of gene networks \citep{Frohlich:2011,Anchang:2009}. More recently, \citet{Shojaie:2014} proposed an algorithm based on scoring rules from lasso regressions and order sorting algorithms; this algorithm is particularly suited for large-scale graphs where the only question is which nodes are connected but not the direction. Finally, a logic based algorithm was introduced \citep{Gross:2019} to cope with large networks and robust inference. However, none of these methods solves the problem with transitive reduction (see Section \ref{sec:transitive-reduction}). 

There are several methods to obtain causal effects from observational data alone. The original idea by \citet{Pearl:1991} and \citet{Spirtes:1991} uses only observational data that lead to Markov equivalent models, where not all causal relations can be identified \citep[see e.g., ][]{Geiger:1990,Pearl:2009}. Alternatively, by assuming that the variables are non-Gaussian, it is possible to identify more causal relations than assuming Gaussian or multinomial random variables; the method is referred to as LinGaM \citep{Shimizu:2006}. Another approach using only observational data is to use non-linear models \citep{Mooij:2009r} which also improves identifiability under certain assumptions \citep[see][for an excellent discussion]{Eberhardt:2017}. 

Mediation analysis can be seen a special case of causal discovery. In a mediation analysis for three variables it is assumed known that there are two causal paths to an outcome variable variable, and only a change in regression weight of the direct path is considered with respect to third mediating variable \citep[see, e.g.,][]{Baron:1986}. In contrast, in causal discovery with three variables such changes are considered for each pair of variables, since it is not assumed known which directions the paths might have. 

Here we focus on methods for causal discovery that combine both observation and intervention data \citep{Tillman:2014}. The combination of the two types of contexts is a combination of theory-driven and a more explorative approach to research. In the linear Gaussian setting the work by \citet{Hyttinen:2012} provides conditions where at the population level causal relations using both observation and (`surgical') intervention data obtain the correct underlying acyclic graph. \citet{Peters:2015} generalised the settings to different types of intervention, which is what we discuss here, and they call it invariant causal prediction \citep[see also ][]{Meinshausen:2016}. \cite{Mooij:2020a} generalised causal discovery even more by creating indicator variables for the setting (observation or any combination of other design factors) and making them part of the graph. In previous work \citep{Kossakowski:2021} we compared in simulations the perturbation graphs combined with transitive reduction and invariant causal prediction (see also Section \ref{sec:application}). Here our contribution is to bring together perturbation graphs and causal invariant prediction and determine possible benefits for research in psychology.

We begin in \hyperref[sec:graphs]{Section \ref*{sec:graphs}} by introducing graphs and the assumptions of the methods. Then in \hyperref[sec:cond-cor]{Section \ref*{sec:cond-cor}} we introduce the conditional correlation and  the marginal version of invariant prediction. In \hyperref[sec:transitive-reduction]{Section \ref{sec:transitive-reduction}} we introduce transitive reduction and show that transitive reduction is in general not consistent for the true underlying graph. Next, we describe invariant causal prediction in \hyperref[sec:cond-inv-pred]{Section \ref{sec:cond-inv-pred}}. Then in \hyperref[sec:application]{Section \ref{sec:application}} we illustrate the combination of perturbation graphs and invariant causal prediction with a data set from social psychology and psychopathology.

\section{Graphical models and interventions}\label{sec:graphs}\noindent
We use the language of graphical models, which ties together a graph with nodes and edges and random variables \citep[e.g.,][]{Lauritzen96,Cowell:1999,Eberhardt:2017,Maathuis:2018}. A directed graph is denoted by $\mathcal{G}$ containing a set of vertices $V=\{1,2,\ldots,m\}$ (also called nodes) and a set of directed edges $E=\{i\to j: i,j\in V\}$ (also called arrows). The graphs we consider have at most a single edge between each pair of nodes. An example of a graph is shown in Figure \ref{fig:d-separation-example}(a). In the configuration $s\to t$, node $s$ is called a parent of $t$ and $t$ is called a child of $s$. The set of nodes $\textsc{pa}(t)$ contains all nodes that are parents of $t$, and the set of nodes in $\textsc{ch}(s)$ contains all children of $s$. A parent is sometimes referred to as a direct cause. A path is a sequence of (at least two) distinct nodes where each node is either a parent or a child of the next node in the sequence. For example, $s\to u\leftarrow t$ is a path. A directed path from $s$ to $t$ is a path such that all edges point in the same direction. For example, $s\to u\to t$ is a directed path. A cycle is a directed path where each node occurs once except the first node. For example, $s\to u\to t\to s$ is a cycle. Node $s$ is called an ancestor of $t$ if there is a directed path from node $s$ to node $t$, as in $s\to\cdots\to t$; the set of nodes $\textsc{an}(t)$ is the set of ancestors of $t$, where we additionally include the node $t$. The configuration $s\to u\leftarrow t$ is called a collider path.

The graphs that we consider are directed acyclic graphs (DAGs). A DAG is a graph with only directed edges and no cycles. DAGs have been frequently studied and have been used as the basic building block in many models \cite[e.g., ][]{Spirtes:1993,Buhlmann:2011,Rothenhausler,Magliacane:2016,Eberhardt:2017,Maathuis:2018}. 
%
%

We use the causal Markov condition to connect graphs and probability distributions using the relation between $d$-separation and conditional independence, as described in \citet{Pearl:2000}, \citet{Pearl:1992}, \citet{Spirtes:1993}, \citet{Peters:2017} and \citet{Maathuis:2018}. Using this allows us to identify the nodes in a graph with the random variables $X=(X_1,X_2,\ldots,X_m)$, which together constitute a graphical model. The edges in a graphical model are defined by an association, like partial correlations for multivariate normal variables.

We use the version of $d$-separation of \citet[][Proposition 3.25]{Lauritzen96}, where a directed graph is transformed to an undirected graph in three steps. First, we obtain the relevant nodes for subset $A$ in which all ancestors of $A$ are included (and so also $A$ itself). In the second step for each two parents of the same child a new edge is introduced. In the third step all edges are made undirected. Then, $d$-separation is defined as follows. Let subsets of nodes $A$, $B$ and $C$ be disjoint. Nodes in $A$ and $B$  are $d$-separated by nodes in $C$ if all paths from nodes in $A$ to nodes in $B$ go through nodes in $C$; if there are no paths from nodes in $A$ to nodes in $B$, then $A$ and $B$ are $d$-separated. 

As an example, consider the directed graph in Figure \ref{fig:d-separation-example}(a). The relevant (ancestral) variables of $A=\{s,t,u\}$ contains the nodes $\{s,t,u,v\}$, the same edges as in the original graph. Next, an additional edge $u-v$ is introduced to the parents $u$ and $v$ of child $s$ (see Figure \ref{fig:d-separation-example}(b)). Now, $u$ and $t$ can be seen to be $d$-separated by $s$ because any path from $u$ to $t$ goes through $s$. And similarly, nodes $v$ and $t$ are $d$-separated by $s$ since all paths from $v$ to $t$ go through $s$. And so $s$ blocks the paths from $u$ and $v$ to $t$. We denote that nodes in $A$ are $d$-separated from nodes in $B$ by $C$ as $A\independent B\mid C$.  More details about the  assumptions and related concepts can be found in Appendix \ref{sec:appendix-gm-assumptions}.
\begin{figure}[t]\centering
\begin{tabular}{c @{\hspace{5em}} c}

\begin{tikzpicture}[->,auto,node distance=2cm,
  thick,main node/.style={circle,draw,font=\footnotesize\sffamily}]
  \tikzstyle{sample} = [circle,draw,font=\footnotesize\sffamily,minimum size=2.5em]
  \tikzstyle{sampleEdge} = [font=\sffamily\small]
	
  \node[sample] (t) {$t$};
    \node[sample] (s) [above left of=t] {$s$};
    \node[sample] (u) [below left of=s] {$u$};
  \node[sample] (v) [above left of=u] {$v$};
 \path[sampleEdge,<-]
    (t) edge node [left] {} (s); 
 \path[sampleEdge,->]
    (u) edge node [left] {} (s) ; 
 \path[sampleEdge,<-]
    (s) edge node [left] {} (v) ; 
\end{tikzpicture}

&

\begin{tikzpicture}[->,auto,node distance=2cm,
  thick,main node/.style={circle,draw,font=\footnotesize\sffamily}]
  \tikzstyle{sample} = [circle,draw,font=\footnotesize\sffamily,minimum size=2.5em]
  \tikzstyle{sampleEdge} = [font=\sffamily\small]
  \tikzstyle{noSample} = [circle,font=\footnotesize\sffamily,minimum size=2.5em]
	
  \node[sample] (t) {$t$};
    \node[sample] (s) [above left of=t] {$s$};
    \node[sample] (u) [below left of=s] {$u$};
  \node[sample] (v) [above left of=u] {$v$};
 \path[sampleEdge,-]
    (v) edge node [left] {} (u) ; 
 \path[sampleEdge,-]
    (s) edge node [left] {} (t) ; 
 \path[sampleEdge,-]
    (s) edge node [left] {} (u) ; 
 \path[sampleEdge,-]
    (v) edge node [left] {} (s) ; 
\end{tikzpicture}

\\

(a) & (b)

\end{tabular}
\caption{(a) A directed acyclic graph (DAG) with a collider path $v\to s\leftarrow u$. (b) The ancestral moralised graph $\mathcal{G}_{\{s,t,u\}}^m$, where the parents $\textsc{pa}(s)=\{u,v\}$ are connected and all arrows are removed and so the graph is undirected. The nodes $u$ and $t$ and $v$ and $t$ are $d$-separated by $\{s\}$.  } 
\label{fig:d-separation-example}
\end{figure}

It is often assumed that all relevant variables are in the analysis of the system under investigation. This is  referred to as the assumption of \textit{causal sufficiency} \citep{Eberhardt:2017}. The assumption is relevant as there could be a correlation (dependence) between variables because of links with a variable outside the variables considered (unobserved variables). We discuss this issue in Section \ref{sec:unobserved}.

We estimate the network in a nodewise fashion, as in \citet{Meinshausen:2006} and \citet{Pircalabelu:2015}. In turn we let each node be the target (dependent) variable in a linear regression with Gaussian noise and determine the neighbourhood of each node from the non-zero coefficients in the regression. The support of the target variable $t$ is a subset of the remaining variables for which the edges are non-zero.
Here, we will restrict attention to linear relations between variables and errors that are Gaussian and uncorrelated. Linearity simplifies many of the ideas and the Gaussian assumption is quite common in the social sciences \citep[e.g.,][]{Koster:1996}.
\begin{assumption}{\em (Linear Gaussian model)}\label{ass:linear-gaussian}
Let $X_{t}$ for any $t\in V$ be a random variable with values in the set of real numbers $\mathbb{R}$ and let $\varepsilon_{t}$ be normally (Gaussian) distributed with mean zero and variance $\sigma^{2}_{t}$ and uncorrelated with any other random variable $\varepsilon_s$ for $s\in V\backslash \{t\}$ (i.e., $\textrm{cov}(\varepsilon_{t},\varepsilon_{s})=\sigma^{2}_{t}$ if $s=t$ and 0 otherwise).  Then the \emph{linear Gaussian model} is
\begin{align}\label{eq:true-model}
X_{t} = \sum_{s\in V\backslash \{t\}} X_{s}\beta_{ts}+\varepsilon_{t}
\end{align}
where $\beta_{ts}$ denotes the coefficient for predictor $X_s$ in the regression for $X_t$.  
\end{assumption}
\citet{Verma:1991} show that using the linear Gaussian model (\ref{eq:true-model}) induces a probability distribution that is Markov with respect to the graph (in fact, they show a more general result which includes this, see also \citet[][Proposition 6.31]{Peters:2017}). Hence, we could suffice with Assumption \ref{ass:linear-gaussian}, which then implies the Markov condition in Assumption \ref{ass:markov}. In any case, for linear Gaussian models we obtain the Markov condition. If we do not assume linearity and independent errors, then we must assume the Markov condition holds, so that $d$-separations of the graph imply conditional independencies in the probability distribution (see Appendix \ref{sec:appendix-gm-assumptions}). 

We are interested in both observational and interventional (experimental) data. Therefore, we need to specify what it means in terms of graphical models to induce an intervention. Here we will use hard and soft interventions \citep[e.g.,][]{Eberhardt:2007}, although more types of interventions exist \citep[see, e.g.,][for an overview]{Mooij:2020a}. In \hyperref[sec:appendix-interventions]{Appendix \ref*{sec:appendix-interventions}} we give precise definitions and provide here a brief description of hard and soft interventions. A hard intervention \citep[e.g.,][]{Pearl:2009} can be considered as an intervention that completely takes over the control of a variable, and so no other variables can affect it. Graphically, a hard intervention is considered as the removal of any effect on the variable, i.e., arrows into the variable are removed. It is sometimes referred to as intervention by replacement \citep{Lauritzen:2002} because a hard intervention replaces the original value (obtained from effects of other variables) by a new one (random or fixed) dictated completely by the experimenter. For example, in a visual perception experiment the level of contrast that can be seen by a participant is a hard intervention. In contrast, a soft intervention leaves the structure of the graph intact and so the effects other nodes have on the intervened node remain. A resulting effect of a soft intervention could, for instance, be a change in mean or variance, or both. But, importantly, the effects of other variables remains present in a soft intervention. Soft interventions are mainly associated with quasi-experimental designs. For instance, a non-randomised design where participants can choose the treatment to a disease is a soft intervention. 

As opposed to normal conditioning, for intervention conditioning we use the notation $p(x\mid\mid C=c)$ for the density $p$ determined by the specification of the context variable $C=c$ \citep{Lauritzen:2002}. The variable $C$ can be considered as a random variable at the nominal level, indicating the type and index of intervention \citep{Eberhardt:2007}.  \citep[See][for different versions of defining a context.]{Mooij:2020a} As explained in Appendix \ref{sec:appendix-interventions}, the type of intervention or no intervention determines what the form of the probability distribution is. For example, in a hard intervention on $s$ (replacing $X_s$ by a random variable $W$), all terms that are parents of node $s$ are removed from the density \citep{Pearl:2000}.  We write $C=s$ to denote that a (hard or soft) intervention is applied to node $s$ and we write $C=\emptyset$ to indicate no intervention and so the observation distribution obtains,  i.e., $p(x\mid\mid C=\emptyset)=p(x\mid C=\emptyset)$.

\section{Conditional correlation, perturbation graph and invariance}\label{sec:cond-cor}\noindent
From observational data the correlations reveal where possible paths between nodes are. 
Reichenbach's (common cause) principle \citep{Reichenbach:1991} about non-zero correlations is in this respect well known: a path can take one of three forms in $\mathcal{G}$ which gives rise to the non-zero correlation \citep{Spirtes:1993,Peters:2017}: $s\to \cdots\to t$ or $s\leftarrow \cdots \leftarrow t$ or $s\leftarrow \cdots\to t$ (see \hyperref[lem:direct-path]{Lemma \ref*{lem:direct-path}}). 
From purely observational data we cannot distinguish between these configurations in linear models \citep{Eberhardt:2017}. By including information from interventions we may be able to distinguish between the three possible configurations. 
Given a non-zero correlation between nodes $s$ and $t$, an intervention on $s$ will lead to a corresponding change in the distribution of $t$, only if there is some path $s\to\cdots\to t$ (see \hyperref[lem:wiggle-s]{Lemma \ref*{lem:wiggle-s}}).
This idea is used in what are called perturbation graphs \citep{Klamt:2010} or response graphs \citep{Gross:2019}.

Perturbation graphs are quite popular in biology, and especially in gene regulatory networks (where the function of the interactions of genes is investigated). The reason is that it is possible to consider the effects of, for example, the absence of a specific gene (in knock-out mice) and determine the phenotypes \citep[see, e.g.,][]{Pinna:2013}. Different sets of genes can be made absent in order to investigate the effect of each gene within the gene regulatory network. A set of such experiments is valuable because not only can the functions of single genes be mapped to phenotypes, but also the interaction with other genes can be considered in such networks.  

These ideas, using sets of different experiments on different variables described above, are shown in Figure \ref{fig:scatterplots-interventions} for the two small graphs $s\to u\to t$ and $s\leftarrow u\to t$, with associated variables $X_{s}$, $X_{u}$ and $X_{t}$ (see \hyperref[sec:appendix-simulations]{Appendix \ref*{sec:appendix-simulations}} for the \textsf{R} code and \hyperref[sec:appendix-calculations-correlations]{Appendix \ref*{sec:appendix-calculations-correlations}} for calculations of correlations). In each of the plots both the observation context is shown (in red) and the hard intervention context (in blue) in terms of a scatterplot and the univariate density plots in the margin. We consider both the densities (conditional distributions) and the scatterplots (regressions).
\begin{figure}[t]\centering
\hspace{2em}$s\to u\to t$  \hspace{11em} $s\leftarrow u\to t$   \\[0.5em]
\pgfimage[width=.45\textwidth]{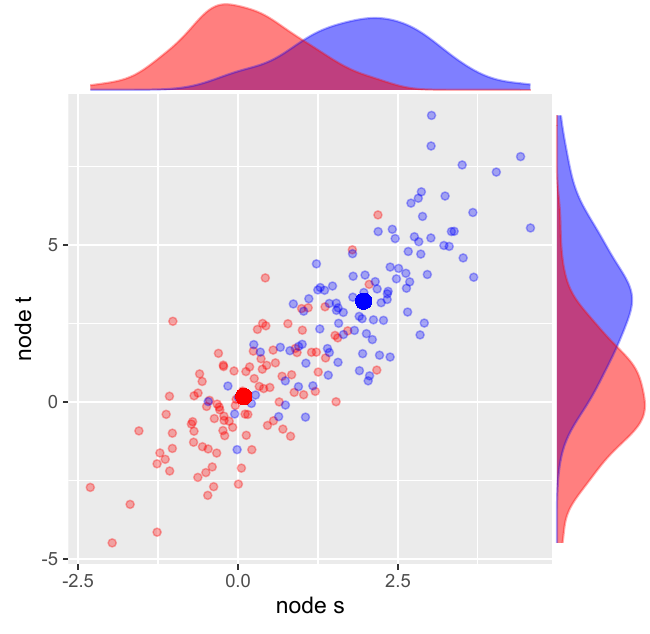}
~
\pgfimage[width=.45\textwidth]{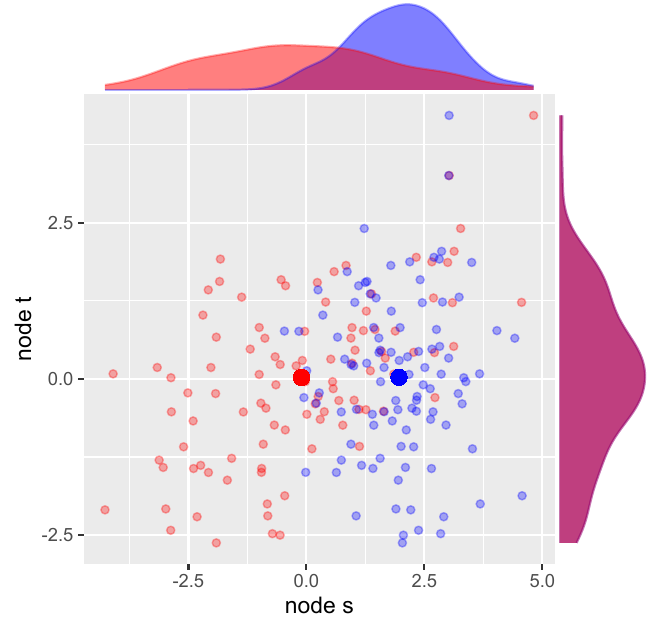}\\[0em]
(a)  \hspace{14.3em} (b)  
\caption{ 
Scatterplots for $X_{s}$ and $X_{t}$ in different contexts, where ${\color{red} C=\emptyset}$ (red) denotes observational data and and ${\color{blue} C=s}$ (blue) denotes interventional data, and the large filled circles are the corresponding empirical means. In (a) the underlying graph is $s\to u\to t$ and in (b) the underlying graph is $s\leftarrow u\to t$. The intervention is the hard intervention $s$ where $X_s$ by $W=2 + N(0,1)$.}
\label{fig:scatterplots-interventions}
\end{figure}

In Figure \ref{fig:scatterplots-interventions}(a) we see for the graph $s\to u\to t$ that without intervention (red) the means of nodes $s$ and $t$ are approximately 0. We apply a hard intervention on $s$ by replacing $X_{s}$ by $W=2+N(0,1)$, where $N(0,1)$ is a normal random variable with mean 0 and variance 1. Then the mean of nodes $s$ and $t$ are shifted upward because of $W$ (see Appendix \ref{sec:appendix-calculations-correlations}). The fact that a change is obtained in node $t$ by intervening on $s$ implies a directed path $s\to \cdots \to t$ (although we do not know if $s$ is a direct cause of $t$). In contrast, for the graph $s\leftarrow u\to t$ in Figure \ref{fig:scatterplots-interventions}(b), a hard intervention on $s$ results in no change in the distribution of node $t$. This leads to the conclusion that there \textit{cannot} be a directed path $s\to \cdots \to t$ (see Lemma \ref{lem:wiggle-s}).


This idea can be used to create a graph where a directed edge could be drawn to indicate a change in distribution from the observation context to the intervention context. An edge $s\to t$ in such a graph could indicate that $s$ is a direct cause (parent) of $t$ or an ancestor of $t$. Such a graph is called a perturbation graph \citep{Klamt:2010,Pinna:2013}.
\begin{definition}{\rm (Perturbation graph)}\label{def:perturbation-graph}
A {\em perturbation graph} $\mathcal{G}^{p}$ is a graph where any edge $s\to t$ represents that intervening on $s$ leads to a change in the distribution of $t$. 
\end{definition}
In the example graph $s\to u\to t$ above, the edge $s\to t$ would be included in the perturbation graph $\mathcal{G}^{p}$ because intervening (perturbing) on $s$ leads to a change in the distribution of $t$. After intervening on each node in turn, the complete perturbation graph $\mathcal{G}^{p}$ would then be $t \leftarrow s\to u\to t$. 

For better understanding we contrast the interpretation of an edge in a perturbation graph with an edge in a causal DAG. An edge in a perturbation graph represents a causal path, like in the example above: the edge $s\to t$ represents the causal path $s\to u\to t$. In a causal DAG the edge $s\to t$ represents the fact that $s$ is a direct cause of $t$ (assuming causal sufficiency, i.e. all relevant variables are measured). This is the main reason why a second step is required for the perturbation graph to prune away the edge $s\to t$, when there is a path $s\to u\to t$ (transitive reduction).  

To determine a change in distribution the mean or some type of correlation is often used \citep{Rice:2005,Klamt:2010}. Interestingly, the example above shows that invariance across different contexts in distribution suggests that the directed edges are similar in different contexts (observation and intervention). This is the idea of invariance across contexts used by \citet{Peters:2015} and \citet{Meinshausen:2016}, although in a more general form that we will describe in Section \ref{sec:cond-inv-pred}.
\begin{definition}{\rm (Marginal invariant prediction)}\label{def:marginal-invariant-prediction} Let $X_{s}$ with values in $\mathbb{R}$ be the single predictor obtained in contexts $C=\emptyset$ or $C=s$ with an intervention on $s$.Then, linear prediction with $X_{s}$ is called {\em marginal invariant predictive} for $X_{t}$ if there is a non-zero $\beta_{ts}\in\mathbb{R}$ such that
\begin{align}\label{eq:marginal-invariance}
X_{t}=X_{s}\beta_{ts}+ \varepsilon_{t},\quad \text{for contexts}\quad C=\emptyset \text{ and } C=s
\end{align}
and $\varepsilon_{t}$ is Gaussian with mean 0 and variance $\sigma^{2}_t$ and $\varepsilon_{t}\independent X_{s}$. 
\end{definition}
Here $X_i\independent X_j\mid X_k$ means that random variables $X_i$ and $X_j$ are independent. Because we have the same coefficient in different contexts, this invariance across different contexts suggests that we can compute the correlation using pooled data from the different contexts. This is the idea used in perturbation graphs \citep[e.g., ][]{Klamt:2010,Pinna:2010}, where non-zero correlations obtained from pooled data are used to determine possible directed paths of the underlying graph. 

We show in Proposition \ref{prop:condcor-invpred} that (\ref{eq:marginal-invariance}) implies that we can pool the data using the conditional correlation. We can define the conditional correlation $\rho_{ts \mid\mid s}$ as the correlation between nodes $t$ and $s$ in the context of an intervention on node $s$, i.e., $C=s$. This means that we use the conditional distribution of $X_t\mid\mid C=s$ to determine the moments, like the mean and covariance.  Then, when pooling data from both $C=s$ and $C=\emptyset$, we obtain the conditional correlation, denoted by $\rho_{ts\mid \mid \{\emptyset,s\}}$. This conditional correlation is based on a mixture distribution of different contexts \citep[see Appendix \ref{sec:appendix-conditional-correlation}; interpreted from][]{Rice:2005}. It follows that if we assume invariance then we can use the pooled information from both observational and interventional contexts to obtain a correlation, but not vice versa. (A proof is in Appendix \ref{sec:appendix-proofs}.)

\begin{proposition}\label{prop:condcor-invpred}
Assume the Markov assumption \ref{ass:markov} and the faithfulness assumption \ref{ass:faith} for a DAG $\mathcal{G}$ and distribution $\mathbb{P}$. Suppose the variables in graph $\mathcal{G}$ are measured in contexts $C=\emptyset$ and $C=s$ of equal probability with an intervention on $s$, and the variances of the nodes in $\mathcal{G}$ are the same across contexts. Then the following are equivalent
\begin{itemize}
\item[(i)]  $s$ is marginal invariant predictive for $t$, and
 \item[(ii)] the correlations without and with intervention are equal, i.e., $\rho_{ts\mid\mid \emptyset}=\rho_{ts\mid\mid s}$.
\end{itemize}
As a consequence, if $\rho_{ts\mid\mid \emptyset}=\rho_{ts\mid\mid s}$, then this also equals $\rho_{ts\mid\mid \{\emptyset,s\}}$
\end{proposition}
\hyperref[prop:condcor-invpred]{Proposition \ref*{prop:condcor-invpred}} makes precise that using the conditional correlation requires the assumptiopn of invariance (\ref{eq:marginal-invariance}), and unites the frameworks of invariant causal prediction by \citet{Peters:2015} and the framework of perturbation graphs \citep{Rice:2005,Klamt:2010}. 

Denote by $I_{\{\emptyset,s\}}$ the index set of observations associated with an equal number of data points for the observational and interventional contexts, where node $s$ was intervened on. The sample estimate of the conditional correlation is then \citep{Rice:2005}
\begin{align}\label{eq:cond-cor-sample}
r_{ts\mid\mid \{\emptyset,s\}}=\frac{\sum_{i\in I_{\{\emptyset,s\}}} (x_{s,i}-\bar{x}_{s})(x_{t,i}-\bar{x}_{t})}{\sqrt{\sum_{i\in I_{\{\emptyset,s\}}}(x_{s,i}-\bar{x}_{s})^{2}}\sqrt{\sum_{i\in I_{\{\emptyset,s\}}}(x_{t,i}-\bar{x}_{t})^{2}}}
\end{align}
In empirical work, the conditional correlation is computed for several different interventions. Each time there is a large conditional correlation between $s$ and $t$, the edge $s\to t$ is drawn, resulting in a set of directed paths called a perturbation graph \citep[see, e.g.,][]{Pinna:2010,Pinna:2013}. 

It is tempting to think that a non-zero conditional correlation suggests there is invariance and hence, that there is evidence for a directed path (causal relation). However, because the conditional correlation is based on a mixture distribution (see Appendix \ref{sec:appendix-conditional-correlation}), it is not in general true that a non-zero conditional correlation implies invariance; there is a counter example (thanks to an anonymous reviewer). 
If we consider the graph $s\leftarrow u\to t$ in Figure \ref{fig:scatterplots-interventions}(b), then the conditional correlation between $X_s$ and $X_t$ is approximately 0.329, thus suggesting a directed path $s\to\cdots\to t$. However, the correlation in the observational context is approximately 0.562, whereas in the interventional context the correlation is -0.003, and so $\rho_{ts\mid\mid \emptyset}\ne \rho_{ts\mid\mid s}$ for the graph $s\leftarrow u\to t$. This implies that obtaining a non-zero conditional correlation cannot provide evidence for a directed path $s\to\cdots \to t$. 

By considering the correlations from each context separately, we can obtain evidence that there is a directed path, and obtain a perturbation graph. The next step is to prune away some of the edges, since not all edges in the perturbation graph are direct causes. 

\section{Transitive reduction and invariant prediction}\label{sec:transitive-reduction}\noindent
We discuss two ways of determining whether a directed path $s\to \cdots\to t$ also implies that $s\to t$, i.e., whether $s$ is a parent (direct cause) of $t$. The first is called transitive reduction and is based on a heuristic criterion derived from a mathematical concept with the same name. The second is called causal invariant prediction and is based on the conditional distribution taking into account other variables. 
We start with transitive reduction in \hyperref[sec:trans-red]{Section \ref*{sec:trans-red}} and show that in general there is no guarantee that an accurate representation of the underlying graph is obtained. Then we will discuss the  invariant causal prediction method in \hyperref[sec:cond-inv-pred]{Section \ref*{sec:cond-inv-pred}} which does obtain an accurate representation of the underlying graph.

\subsection{Transitive reduction}\label{sec:trans-red}\noindent
The main idea of transitive reduction is that a direct cause (parent) should have a large contribution to the correlation; if there is no evidence of a large direct contribution, then there should not be a direct cause \citep{Rice:2005}. This version of pruning connections to determine whether there is a direct effect is directly related to the idea in graph theory of transitive reduction. In graph theory, the graph $\mathcal{G}_{2}$ is a transitive reduction of graph $\mathcal{G}_{1}$, where both graphs have the same set of nodes, if (i) $\mathcal{G}_{2}$ has a directed path from any node $s$ to a node $t$ only if $\mathcal{G}_{1}$ has such a path, and (ii) there is no other graph with fewer edges than $\mathcal{G}_{2}$ satisfying (i) \citep{Aho:1972}. This mathematical construction of a graph with similar connectivity properties does not intuitively lead to an accurate reconstruction of direct causes of nodes in a graph. We will show that, indeed, there is no guarantee with the method of transitive reduction that an accurate representation of the underlying graph is obtained. 


A result by \citet[][p. 567]{Wright:1921} shows that the correlation between $s$ and $t$ is obtained from different paths between $s$ and $t$ where each path $\alpha$ for $s\to \cdots\to t$ is the product of the correlations \citep[path coefficients,][]{Wright:1934}
\begin{align}\label{eq:path-coefficient-product}
d_{\alpha}=\rho_{v_{1}s}\rho_{v_{2}v_{1}}\rho_{v_{3}v_{2}}\cdots \rho_{tv_{k-1}}
\end{align}
(see \hyperref[sec:appendix-path-analysis]{Appendix \ref*{sec:appendix-path-analysis}} on path analysis for details).
%
Take, for example, the graph in Figure \ref{fig:wright-graph}(a) with coefficients $\rho_{ij}$. 
{\color{blue} In Figure \ref{fig:wright-graph}(a) path $\alpha_{1}$ is $s\to v\to t$ and path $\alpha_{2}$ is $s\to w\to t$}. 
The contribution to the correlation of the top path is $d_{\alpha_{1}}=\rho_{vs}\rho_{tv}$ and the contribution of the bottom path is $d_{\alpha_{2}}=\rho_{ws}\rho_{tw}$, which gives the correlation between $s$ and $t$, i.e., $\rho_{st}=d_{\alpha_{1}}+d_{\alpha_{2}}$. It follows that the correlation between $s$ and $t$ for the data given in Figure \ref{fig:wright-graph}(a) is 
\begin{align*}
\rho_{st}=\rho_{sv}\rho_{vt}+\rho_{sw}\rho_{wt}=(-0.2)(-0.2) + (-0.2)(-0.2)=0.08
\end{align*}
Here we see that the smallest absolute coefficient on the path between $s$ and $t$ is larger than the correlation between $s$ and $t$, i.e., $|\rho_{sv}|=0.2>0.08=|\rho_{st}|$. 
This gives rise to the heuristic criterion that if all of the conditional correlations $|\rho_{v_{j-1}v_{j}\mid\mid \{\emptyset,s\}}|$ (in absolute value) along the path between $s$ and $t$ are $>|\rho_{st\mid\mid \{\emptyset,s\}}|$, then it is likely that the correlation $\rho_{ts\mid\mid \{\emptyset,s\}}$ is induced by an alternative path and there is no direct connection $s\to t$, and it should be removed from the perturbation graph. \citet{Rice:2005} suggest that the criterion
\begin{align}\label{eq:tr-criterion}
\min\{|\rho_{sv_{1}\mid\mid \{\emptyset,s\}}|,\ldots,|\rho_{v_{k}t\mid\mid \{\emptyset,s\}}|\}>|\rho_{st\mid\mid \{\emptyset,s\}}|
\end{align}
is a good heuristic and it is clear that criterion (\ref{eq:tr-criterion}) is satisfied for the graph in Figure \ref{fig:wright-graph}(a). 

Unfortunately, criterion (\ref{eq:tr-criterion}) does not hold in general.
Consider the following counter example. For Figure \ref{fig:wright-graph}(b) there is a third path $s\to t$ with coefficient $-0.1$. The correlation between $s$ and $t$ is now $\rho_{ts}=(-0.20)(-0.20) + (-0.20)(-0.20)+ (-0.10)=-0.02$. This is closer to 0 than the minimum of any of the coefficients, and hence criterion (\ref{eq:tr-criterion}) is satisfied but leads to the wrong conclusion that parent $s$ should be removed from the set of possible causes of $t$. 
\begin{figure}[t]\centering
\begin{tabular}{c @{\hspace{5em}} c}

\begin{tikzpicture}[->,auto,node distance=2cm,
  thick,main node/.style={circle,draw,font=\footnotesize\sffamily}]
  \tikzstyle{sample} = [circle,draw,font=\footnotesize\sffamily,minimum size=2.5em]
  \tikzstyle{sampleEdge} = [font=\sffamily\small]
	
  \node[sample] (t) {$t$};
  \node[sample] (v) [above left of=t] {$v$};
  \node[sample] (s) [below left of=v] {$s$};
  \node[sample] (w) [below left of=t] {$w$};
  
 \path[sampleEdge,<-]
    (t) edge [left] node [above, sloped] {$-0.2$}  (v);
 \path[sampleEdge,<-]
    (v) edge [left] node [above,sloped] {$-0.2$} (s) ;
  \path[sampleEdge,<-]
    (t) edge [left] node [above, sloped] {$-0.2$}  (w);
  \path[sampleEdge,<-]
    (w) edge [left] node [above, sloped] {$-0.2$}  (s);
 
\end{tikzpicture}

&

\begin{tikzpicture}[->,auto,node distance=2cm,
  thick,main node/.style={circle,draw,font=\footnotesize\sffamily}]
  \tikzstyle{sample} = [circle,draw,font=\footnotesize\sffamily,minimum size=2.5em]
  \tikzstyle{sampleEdge} = [font=\sffamily\small]
	
   \node[sample] (t) {$t$};
  \node[sample] (v) [above left of=t] {$v$};
  \node[sample] (s) [below left of=v] {$s$};
  \node[sample] (w) [below left of=t] {$w$};
  
 \path[sampleEdge,<-]
    (t) edge [left] node [above, sloped] {$-0.2$}  (v);
 \path[sampleEdge,<-]
    (v) edge [left] node [above,sloped] {$-0.2$} (s) ;
  \path[sampleEdge,<-]
    (t) edge [left] node [above, sloped] {$-0.2$}  (w);
  \path[sampleEdge,<-]
    (w) edge [left] node [above, sloped] {$-0.2$}  (s);
  \path[sampleEdge,<-]
    (t) edge [left] node [above, sloped] {$-0.1$}  (s);
   
\end{tikzpicture}

\\

(a) & (b)

\end{tabular}
\caption{Two graphs with partial correlations on the edges.  In (a) applying criterion (\ref{eq:tr-criterion}) to the graph leads to the correct conclusion that $s$ is not a parent of $t$. We obtain for the correlation $\rho_{ts}=2(-0.2)(-0.2)=0.08<0.2=\min |\rho_{ij}|$, where the minimum is obtained from the two directed paths from $s$ to $t$. In (b) the incorrect inference is obtained. We have the correlation $\rho_{ts}=2(-0.2)(-0.2)+(-0.1)=-0.02$ and $|-0.02|<0.2=\min |\rho_{ij}|$.}\label{fig:wright-graph}
\end{figure}
Criterion (\ref{eq:tr-criterion}) does not hold in general because for smaller values of the coefficient $\beta_{ts}$ for $s\to t$, the correlation will not necessarily exceed the smallest value of the coefficients on the directed paths from $s$ to $t$ (see Lemma \ref{lem:path-prod} in Appendix \ref{sec:appendix-lemmas}).

That using criterion (\ref{eq:tr-criterion}) in transitive reduction cannot in general lead to the correct graph was confirmed by simulations in \citet{Kossakowski:2021}, where the TRANSWESD and DR-FFL algorithms were implemented. The simulations showed that the true underlying graphs were poorly reconstructed. Another way to deal with alternative paths is to condition on other nodes on directed paths between the nodes of interest, as in the classical way of \citet{Pearl:2001} and \citet{Spirtes:1993}.  This is the suggestion of \citet{Peters:2015} and \citet{Meinshausen:2016} and \citet{Magliacane:2016}, also described in \citet{Mooij:2020a}, which we describe next.

\subsection{Invariant causal prediction}\label{sec:cond-inv-pred}\noindent
Following the ideas of \citet{Pearl:1988,Pearl:2000} and \citet{Spirtes:1993}, determining direct causes (parents) $s\to t$ can be done by conditioning on intermediate variables. Consider the graph $s\to u\to t$ (part of the graph in Figure \ref{fig:perturb-graph}(a)). We find a correlation between $s$ and $t$, but conditioning on $u$ results in a partial correlation of 0, no matter if we intervene on $s$ or not. This is the seminal idea of \citet{Peters:2015} and \citet{Meinshausen:2016} and \citet{Magliacane:2016}, where invariant prediction was used as a guiding principle, following \citet{Pearl:2000} and \citet{Eberhardt:2007}, for instance. 

Consider the graph in Figure \ref{fig:perturb-graph}(a), where we intervene on node $s$. Marginal invariant prediction would use observational data and data from the intervention on $s$, and come to the conclusion that $\beta_{ts}\ne 0$ in both contexts, and so $s$ is relevant for predicting $t$ but $s$ may not be a parent of $t$. 
If we include node $u$ in the set of predictors for $t$, i.e., we use $(X_{s},X_{u})$ as predictors for $X_t$, then $u$ will block the path from $s$ to $t$ ($s$ and $t$ are $d$-separated by $u$). This is illustrated in Figure \ref{fig:scatterplots-invariant-prediction}, where in Figure \ref{fig:scatterplots-invariant-prediction}(a) no conditioning takes place and we find evidence that there is a directed path $s\to\cdots \to t$, and in Figure \ref{fig:scatterplots-invariant-prediction}(b) $u$ is conditioned on and we find evidence that $u$ is in between $s$ and $t$ (see calculations in Appendix \ref{sec:appendix-calculations-example}). Consequently, we will find that the distribution of $X_{t}\mid x_{u}\mid\mid C=\emptyset$ is equal to the distribution of $X_{t}\mid x_{u}\mid\mid C=s$. This is equivalent to the residual distributions $X_s-X_u$ and $X_t-X_u$ being equal, as shown in Figure \ref{fig:scatterplots-invariant-prediction}(b). The conclusion will then be correct that $s$ is not a direct cause of $t$. 


%
\begin{figure}[t]\centering
\begin{tabular}{c @{\hspace{5em}} c}

\begin{tikzpicture}[->,auto,node distance=2cm,
  thick,main node/.style={circle,draw,font=\footnotesize\sffamily}]
  \tikzstyle{sample} = [circle,draw,font=\footnotesize\sffamily,minimum size=2.5em]
  \tikzstyle{sampleEdge} = [font=\sffamily\small]
	
  \node[sample] (t) {$t$};
    \node[sample] (s) [above left of=t] {$s$};
    \node[sample] (u) [below left of=s] {$u$};
  \node[sample] (v) [above left of=u] {$v$};
 \path[sampleEdge,<-]
    (t) edge node [left] {} (u); 
 \path[sampleEdge,<-]
    (u) edge node [left] {} (s) ; 
 \path[sampleEdge,<-]
    (s) edge node [left] {} (v) ; 
\path[sampleEdge,->,dashed]
    (s) edge node [left] {} (t) ; 
\end{tikzpicture}

&

\begin{tikzpicture}[->,auto,node distance=2cm,
  thick,main node/.style={circle,draw,font=\footnotesize\sffamily}]
  \tikzstyle{sample} = [circle,draw,font=\footnotesize\sffamily,minimum size=2.5em]
  \tikzstyle{sampleEdge} = [font=\sffamily\small]
	
  \node[sample] (t) {$t$};
    \node[sample] (s) [above left of=t] {$s$};
    \node[sample] (u) [below left of=s] {$u$};
  \node[sample] (v) [above left of=u] {$v$};
 \path[sampleEdge,<-]
    (t) edge node [left] {} (u); 
 \path[sampleEdge,<-]
    (t) edge node [left] {} (s) ; 
 \path[sampleEdge,<-]
    (s) edge node [left] {} (v) ; 
\end{tikzpicture}

\\

(a) & (b)

\end{tabular}
\caption{Two perturbation graphs with an intervention on node $s$ denoted by $\{s\}$. In (a) perturbing on $s$ will not result in a change in the distribution of $t$ if node $u$ is conditioned on. But, in a perturbation graph (not conditioning on $u$), then the dashed edge $s\to t$ obtains. In (b) an intervention on node $s$ will result in a change in the distribution of node $t$, no matter if node $u$ is conditioned on or not.   }
\label{fig:perturb-graph}
\end{figure}
\begin{figure}[t]\centering
\pgfimage[width=.45\textwidth]{sut-chain-same-var.pdf}
~
\pgfimage[width=.45\textwidth]{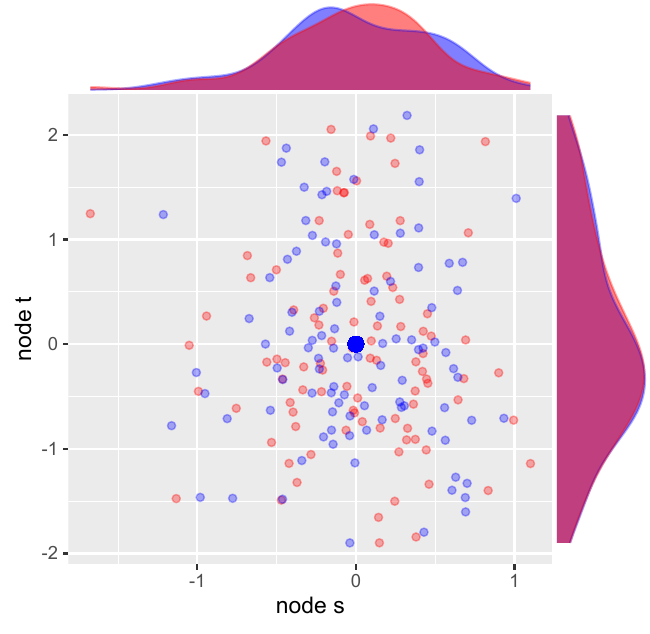}\\[-.5em]
(a)  \hspace{14em} (b)
\caption{ 
Scatterplots for $X_{s}$ and $X_{t}$ in (a) and the residuals $X_s-X_u$ and $X_t-X_u$ in (b) for the underlying graph $s\to u\to t$ in different contexts, where ${\color{red} C=\emptyset}$ and ${\color{blue} C=s}$ and the intervention on $s$ replaces $X_s$ by $W=2+ N(0,1)$. The larger filled circles represent the corresponding empirical means. In (a) the intervention on $s$ and no conditioning on $u$, which results in shifted distributions with respect to the observational distributions. In (b) the residuals $X_s-X_u$ and $X_t-X_u$ (conditioning on $u$), resulting in a 0 correlation (approximately) and no change in distribution (approximately).}
\label{fig:scatterplots-invariant-prediction}
\end{figure}

This example shows that indirect connections are removed by conditioning, both in the observation and intervention contexts. 
These considerations lead to the following definition \citep{Peters:2015}, extending the marginal invariance defined previously in Definition \ref{def:marginal-invariant-prediction} for the perturbation graph.
\begin{definition}{\rm (Conditional invariant prediction)} \label{def:conditional-invariant-prediction} {Let $(X_{1},X_{2},\ldots,X_{m-1})$ with values in $\mathbb{R}^{m-1}$ be the predictors of $X_t$, and let $\beta_{ts}$ be the coefficients for the linear regression of $X_t$ on $X_s$ for all $s \ne t$, as in model (\ref{eq:true-model}). The linear prediction is called {\em conditional invariant predictive for $t$} if there is a set $S\subseteq V\backslash\{t\}$ such that
\begin{align}\label{eq:conditional-invariant-prediction}
X_{t}=\sum_{s\in S}X_{s}\beta_{ts}+ \varepsilon_{t}\quad \text{for contexts}\quad C=\emptyset \text{ and } C=s
\end{align}
where $\varepsilon_{t}$ is Gaussian with mean 0 and variance $\sigma^{2}_t$ and $\varepsilon_{t}\independent X_{s}$. }
\end{definition}

Using Definition \ref{def:conditional-invariant-prediction} ensures that we learn about direct causes, and not only about causal paths, as is the case with a perturbation graph. However, there may be several sets of variables that lead to conditional invariant prediction. Consider Figure \ref{fig:perturb-graph}(a) again. If we take as conditioning variable set $S_{u}=\{u\}$, then intervening on $v$ will not lead to a change in $t$. Similarly, if we condition on $S_{s,u}=\{s,u\}$, then intervening on $v$ will not lead to a change in $t$. Hence there are two sets, $S_u$ and $S_{s,u}$ that satisfy conditional invariant prediction,  i.e., the conditional distributions $X_t\mid x_u\mid\mid C=v$ and $X_t\mid x_u,x_s\mid\mid C=v$ are equal. This implies that we must choose which set leads to the correct set of direct causes. \citet{Peters:2015} show that taking the common nodes among the sets that satisfy conditional invariant prediction (in our example $S_u$ and $S_{s,u}$), obtains the direct causes. In the example, we see that the intersection $S_u\cap S_{s,u}$ (common nodes of $S_u$ and $S_{s,u}$) gives us the correct set $S_u$, and so $u$ is a direct cause of $t$. Taking the common variables that give conditional invariant prediction in Definition \ref{def:conditional-invariant-prediction} thus leads to the unique and correct set of direct causes. 

By taking the intersection, the estimate of the direct causes is relatively small (conservative) and is, in fact, with high probability contained in the set of true direct causes \citep[][Theorem 1]{Peters:2015}. This follows from two things: (1) the set of common nodes (intersection) for which the conditional invariant prediction holds is a subset of the set of true direct causes, and (2) each test on the conditional invariant prediction for a set $S$ is of level $\alpha$, the significance level. Then the probability of the intersection being in the set of true direct causes is at least $1-\alpha$. To see this, we go to the example above related to Figure \ref{fig:perturb-graph}(a) again. We have two sets $S_u$ and $S_{s,u}$ and their intersection belongs to the set of true direct causes of $t$, $S_u$ (by (1)). And the probability for each not to reject is at least $1-\alpha$ if we test both at level $\alpha$. Then the probability of the intersection being in the set of true direct causes of $t$ is at least as large as the probability of the set of true direct causes not being rejected (because the intersection is smaller than the true set), which is at least $1-\alpha$ (by (2)). \citet[][Theorem 1]{Peters:2015} show by similar reasoning that from this it also follows that we can obtain accurate confidence intervals (good coverage). We illustrate these confidence intervals in Section \ref{sec:application}.

Invariant causal prediction solves the issue with transitive reduction, and \citet{Peters:2015} prove that invariant causal prediction is consistent for the correct subset of direct causal relations \citep[see also][]{Mooij:2020a}.
Sufficient conditions to obtain all parents (direct causes) of a node in the linear Gaussian model (\ref{eq:true-model}) have been obtained for hard interventions: (a) each of the predictor variables $X_s$ is intervened on, and (b) the means of the intervened variable is not the same as the mean in the observed context \citep[][Theorem 2(i)]{Peters:2015}. Although this is sufficient to obtain all parents, it is not necessary. For example, in the graph $t\leftarrow u\to s\to t\to v$ the correct set of parents of $t$ is $\{u,s\}$ and can be obtained by only intervening on $u$, an intervention on $s$ is not necessary \citep[][Appendix A]{Peters:2015}. There are likely many more situations where not all and only some nodes need to be intervened on, but currently not all necessary conditions are known; we briefly come back to this in the Discussion \ref{sec:conclusion-discussion}.



\subsection{Hidden or unobserved confounders}\label{sec:unobserved}\noindent
In causal analyses it is often assumed that all relevant variables are observed and included in the analysis (causal sufficiency). If causal sufficiency holds, then the assumption that the residual $\varepsilon_{t}$ is independent of the predictors $X_{s}$ for some $s\in S$ should hold. However, to assume causal sufficiency is rather presumptuous \citep{Pearl:2000}. For example, suppose that we have the graph of Figure \ref{fig:hidden-graph}(a). Here node $u$ is unobserved (hidden) and so we work with the marginal distribution over the nodes $s,t$ and a binary variable $C$ that induces a soft intervention such that $\expt(X_{s}\mid\mid C=0)\ne \expt(X_{s}\mid\mid C=1)$. 
\begin{figure}[t]\centering
\begin{tabular}{c @{\hspace{5em}} c}

\begin{tikzpicture}[->,auto,node distance=2cm,
  thick,main node/.style={circle,draw,font=\footnotesize\sffamily}]
  \tikzstyle{sample} = [circle,draw,font=\footnotesize\sffamily,minimum size=2.5em]
  \tikzstyle{sampleEdge} = [font=\sffamily\small]
	
  \node[sample] (t) {$t$};
    \node[sample] (u) [above left of=t] {$u$};
    \node[sample] (s) [below left of=u] {$s$};
  \node[sample] (C) [above left of=s] {$C$};
 \path[sampleEdge,<-]
    (t) edge node [left] {} (u) ;
 \path[sampleEdge,<-,dotted]
    (s) edge node [left] {} (u) ; 
 \path[sampleEdge,<-]
    (s) edge node [left] {} (C) ; 
 \path[sampleEdge,<-]
    (t) edge node [left] {} (s) ; 
\end{tikzpicture}

&

\begin{tikzpicture}[->,auto,node distance=2cm,
  thick,main node/.style={circle,draw,font=\footnotesize\sffamily}]
  \tikzstyle{sample} = [circle,draw,font=\footnotesize\sffamily,minimum size=2.5em]
  \tikzstyle{sampleEdge} = [font=\sffamily\small]
	
  \node[sample] (t) {$t$};
    \node[sample] (u) [above left of=t] {$u$};
    \node[sample] (s) [below left of=u] {$s$};
  \node[sample] (C) [above left of=s] {$C$};
 \path[sampleEdge,<-]
    (t) edge node [left] {} (u) ;
 \path[sampleEdge,<-]
    (s) edge node [left] {} (u) ; 
 \path[sampleEdge,<-]
    (s) edge node [left] {} (C) ; 
 \path[sampleEdge,<-]
    (t) edge node [left] {} (s) ; 
 \path[sampleEdge,<-,dotted]
    (u) edge node [left] {} (C) ;     
 \path[sampleEdge,<-,dotted]
    (s) edge [bend right] node [left] {} (t) ;     
\end{tikzpicture}

\\

(a) & (b)

\end{tabular}
\caption{Two graphs where $u$ is unobserved and $C$ is an intervention node. In (a) a graph where the intervention node $C$ removes the effect of node $u$ onto $s$. In (b) The intervention node violates (2) affecting both $s$ and $u$, and the edge from $t$ to $s$ violates (3) which states that there is no feedback.  }\label{fig:hidden-graph}
\end{figure}
If we know that node $u$ exists we can take $u$ into account and invariant causal prediction works as usual. But if we are unaware of node $u$, and we do not take it into account, then we can no longer assume that the residual $\varepsilon_{t}=X_{t}-X_{ts}\beta_{ts}$ and its support $(X_{s},X_{u})$ are independent (see Appendix \ref{sec:appendix-unobserved}). Hence, the effect of node $u$ remains and the assumption of independence of $\varepsilon_t$ and $X_s$ for all $s\in S$ in (\ref{eq:conditional-invariant-prediction}) is no longer valid. 

We can then take two different routes to infer causal relations in the setting of unobserved confounders. (i) We can weaken Definition  \ref{def:conditional-invariant-prediction} and allow for dependence between the residual and predictors; or (ii) we can include an additional variable for which we know the relation to the target and source variable (instrumental variable).

To begin with (i) where we weaken Definition \ref{def:conditional-invariant-prediction}, the idea is that this weakened definition represents causal effects that may have been caused by unobserved variables \citep[][Proposition 5]{Peters:2015}, and so the conclusion is weakened to ancestors as possible causes. The representation that is obtained from the measured variables is simply an incomplete picture but it is known that some predictors are ancestors of the target variable. 

The second route is to use an observed variable to remove the correlation between the residual and the predictors.  We therefore have to invoke an experiment with known causal relation, such that (1) for any soft univariate intervention from $C$ to $s$ the effect cannot be constant, e.g., $\expt(X_{s}\mid\mid C=0)\ne \expt(X_{s}\mid\mid C=1)$, (2) the variable $C$ ($=0$ or $1$) is connected to a single predictor, and (3) there is no feedback from the target variable $X_{t}$ to the other nodes \citep{Peters:2015}. Figure \ref{fig:hidden-graph}(b) shows two (dotted) edges that violate the criteria. The dotted edge $(t,s)$ violates (3) that there be no feedback from the target variable, and the edge $(C,u)$ violates (2) that the instrumental variable only affects a single node. In practice, we can choose an intervention on a particular (single) node $s$ as the instrumental variable $C$. Considering the assumptions (1)-(3) we must demand that the intervention has no feedback from the target node to any of the other nodes (criterion (3)). So, this is basically a stronger version of invariant causal prediction, where we assume a more fine-grained intervention of the environment (the experiment nudges a particular variable), and an assumption on no feedback. Especially condition (2), that the intervention is specific to only a single node, is problematic in the social sciences.

\section{Application to psychology data}\label{sec:application}\noindent
In psychology controlled experiments are considered a valuable tool to infer causal relations. 
The causal conclusions from such experiments are similar to perturbation graphs in that the factors are ancestors of the changes in distributions (often the means) of the dependent variable. 
Invariant causal prediction can be considered as a generalisation of more traditional experiments in psychology because: (1) using additional variables (covariates) leads to more information on direct causes (parents), and (2) different changes in distribution (other than the mean) are considered to assess causal effects. Here we discuss two example data sets; one which explicitly involved the methodology of perturbation graphs by intervening on multiple variables one by one \citep{Hoekstra:2018}, and another dataset where we have a single time series of a person diagnosed with major depressive disorder \citep{Wichers:2016}.  

\subsection{Meat eating example}\noindent
The first dataset we use involves the topic of meat consumption and is fully described in \citet{Hoekstra:2018}. The design resembles the graph perturbation method of \hyperref[sec:cond-cor]{Section \ref*{sec:cond-cor}}. First, a questionnaire about attitudes toward meat consumption (e.g., ``The production of meat is harmful for the environment'') is administered without any intervention (observation or wild type). Subsequently, a hypothetical scenario is presented (e.g., ``The meat and dairy industry has a huge CO-2 emission and is therefore harmful for the environment. How does this influence your attitude towards the consumption of meat?'') at which point the same questionnaire is administered again. This is repeated for each of the 11 statements in the questionnaire (see the caption of Figure \ref{fig:meat-graph}). The responses were measured on a 7 point Likert scale.

\begin{figure}[t]\centering
\begin{tabular}{c @{\hspace{2em}} c}

\scalebox{0.8}{\begin{tikzpicture}[->,auto,node distance=3cm, 
  thick,main node/.style={circle,draw,font=\footnotesize\sffamily}]
  \tikzstyle{sample} = [circle,draw,font=\footnotesize\sffamily,minimum size=3em]
  \tikzstyle{sampleEdge} = [font=\sffamily\small]
	
    \node[sample] (guilty) {guilty};
    \node[sample] (infer) [above left of=guilty] {infer};
    \node[sample] (suff) [below of=guilty] {suff};
    \node[sample] (tax) [right of=infer] {tax};
    \node[sample] (moral) [above right of=tax] {moral};	
    \node[sample] (taste) [above left of=tax] {taste};
    \node[sample] (nutr) [right of=guilty] {nutr};	    
    \node[sample] (death) [left of=guilty] {death};	
    \node[sample] (envir) [below right of=guilty] {envir};
    \node[sample] (sad) [below left of=guilty] {sad};	
    \node[sample] (disg) [right of=tax] {disg};	      

 \path[sampleEdge,->]
    (guilty) edge[bend right=20] node [left] {} (infer)
    (guilty) edge node [left] {} (death)
    (infer) edge[bend right=20] node [left] {} (guilty)    
    (suff) edge node [left] {} (guilty)
    (tax) edge node [left] {} (guilty)
    (taste) edge node [left] {} (guilty)    
    (moral) edge node [left] {} (tax) ; 

\end{tikzpicture}}
&
\pgfimage[width=.5\textwidth]{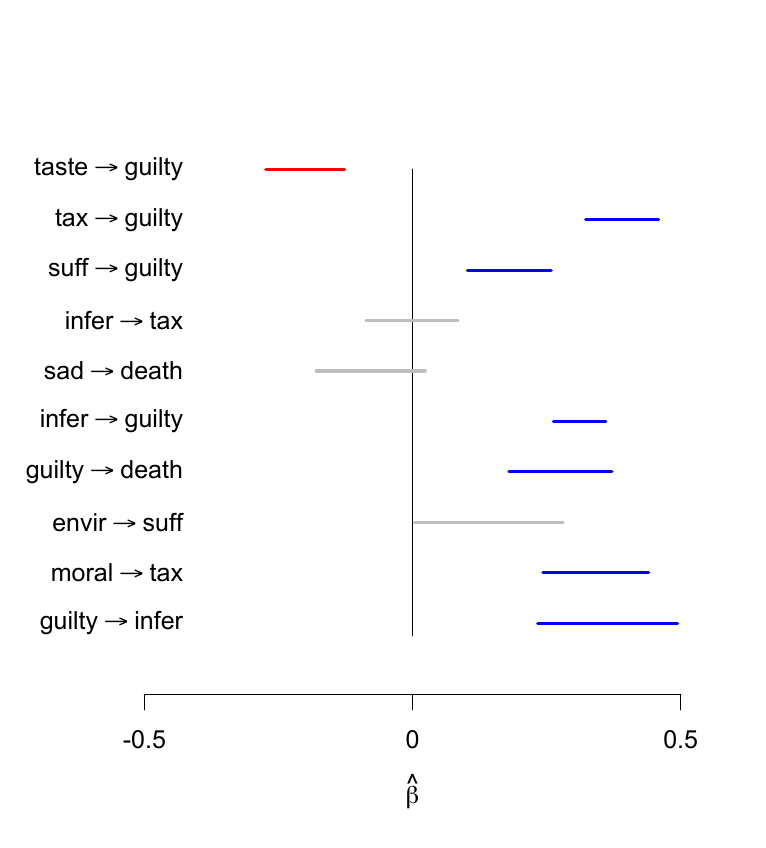}\\
(a) &(b)
\end{tabular}
\caption{(a) Graph of the conditional invariant prediction method applied to the data on attitudes of meat consumption. 
The nodes represent: \textsf{moral}: eating is morally wrong, \textsf{nutr}: meat contains important nutrients, \textsf{envir}: production of meat is harmful to the environment, \textsf{infer}: animals are inferior to people, \textsf{suff}: by consuming meat you contribute to animal suffereing, \textsf{tax}: there should be a tax on meat, \textsf{taste}: I like the taste of meat, \textsf{death}: meat reminds me of death and suffering of animals, \textsf{sad}: if I had to stop eating meat I would feel sad, \textsf{guilty}: if I eat meat I feel guilty, \textsf{disg}: if I eat meat I feel disgust. (b) $95\%$ Confidence intervals of 10 different edges. Blue lines indicate significant (and are included in the graph in (a)), red lines indicate negative values, and gray indicates not significant (there are 45 more confidence intervals that contain 0 which are not shown). 
 }
\label{fig:meat-graph}
\end{figure}

For the analysis the conditional invariant prediction algorithm is applied to each node separately.  In \citet{Kossakowski:2021} we also analysed these data but there we used an idea similar to that used in \citet{Meinshausen:2016}, where each pair of contexts was used to obtain an estimate of causal relations. Here we apply a more straightforward approach where simply all contexts (perturbations) were given as input. We assumed tacitly (after checking that the distributions were approximately unimodal and symmetric) that the Likert scales could be treated as continuous random variables. The \textsf{R} package \texttt{\small InvariantCausalPrediction} is used. The algorithm produces for each of the contexts regression confidence intervals and chooses the lower bound of the confidence interval (minimax estimate). Invariant causal prediction obtains an appropriate model for each context, and as such assumes that the residual distributions that are compared among the contexts are independent (which the theory prescribes is true). We set the level to 0.05 for each context (1 observation and 11 perturbation contexts). The test for the conditional distributions was the Kolmogorov-Smirnov test. The Kolmogorov-Smirnov test takes the largest absolute discrepancy of two cumulative probability distributions as a test statistic, and can be shown to be distribution-free \citep[see, e.g., ][Chapter 19]{Vandervaart98}. 

In Figure \ref{fig:meat-graph}(a) we see that seven out of the 11 nodes are connected with an edge. There is a chain from \textsf{moral} $\to$ \textsf{tax} $\to$ \textsf{guilty}, and \textsf{suff}, \textsf{taste} and \textsf{infer} also have an arrow into \textsf{guilty}. Interestingly, the graph shows a feedback loop \textsf{infer} $\rightleftarrows$ \textsf{guilty}, which is a cycle, assuming bounds on the weights \citep[see][for details, the absolute eigenvalues of the coefficient matrix all $<1$]{Rothenhausler}.

Figure \ref{fig:meat-graph}(b) shows (some of) the $95\%$ confidence intervals of the edges (there are 45 confidence intervals not shown which contain 0). These confidence intervals are obtained by pooling the data over the contexts, given that there is evidence that the conditional distributions are the same across different contexts \citep{Peters:2015}. These confidence intervals can therefore be interpreted as usual. The edges in the graph in Figure \ref{fig:meat-graph}(a) are confirmed by the confidence intervals in Figure \ref{fig:meat-graph}(b).

\subsection{Time series of a patient with major depressive disorder}\noindent
Another way to apply the theory of invariant causal prediction is to time series \citep{Pfister:2019}. The basic idea is to split the time series into different consecutive parts that are considered different contexts. The reason a section (block) of a time series can be considered as a context is that there is a particular event or change that has a particular impact. The advantage of the invariant causal prediction approach is that the assumptions about the interventions are not strong.  It is assumed that there is an effect on the distribution, but there could be multiple variables involved. 

Here we apply invariant causal prediction to a time series of a patient diagnosed with major depressive disorder. The time series, obtained with ecological momentary assessment, consists of 1478 measurements over the course of 239 consecutive days \citep{Kossakowski:2017}. During the measurements the antidepressants were weaned off, where neither the patient nor the physician knew at which times exactly the reduction in antidepressants started. In total we have four variables each consisting (sum score) of several items related to mood, physical condition, self-esteem and symptoms of depression.

We applied the algorithm for invariant causal prediction called \texttt{seqICP} \citep{Pfister:2019}. Testing for similarity of environments is done with an adaptation of the Chow test, which considers the normalised variance of the residuals after regression. We selected concurrent and previous timepoints for the linear regressions in predicting the variables. We set the significance level for the tests to 0.05. 

In Figure \ref{fig:groot-graph}(a) we see the resulting graph with four edges and in Figure \ref{fig:groot-graph}(b) the corresponding confidence intervals (the confidence intervals which include 0, and so are not significant, are not shown). We see that \textsf{mood} is not connected. This could be because \textsf{mood} was affected directly and so no set of direct causes could be established. Interestingly, \textsf{phys} (physical condition) and \textsf{symp} (symptoms) are in a feedback system, so that physical condition has an effect on symptoms and vice versa. This may be relevant for this patient to consider for interventions; it may be easier to improve physical condition directly than symptoms. 
\begin{figure}[t]\centering
\begin{tabular}{c @{\hspace{2em}} c}
\pgfimage[width=.5\textwidth]{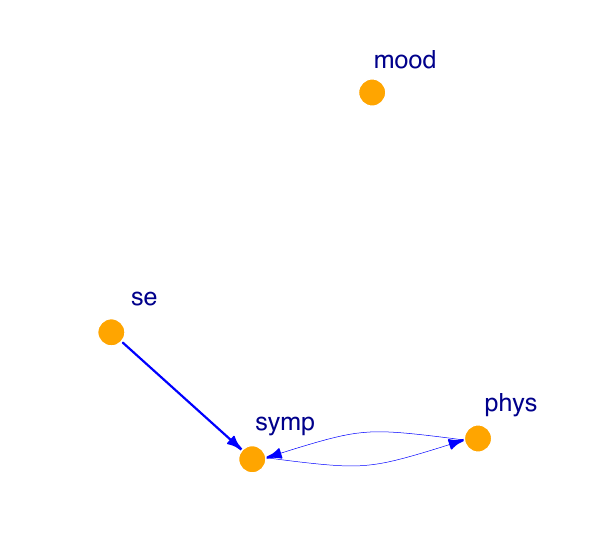}
&
\pgfimage[width=.5\textwidth]{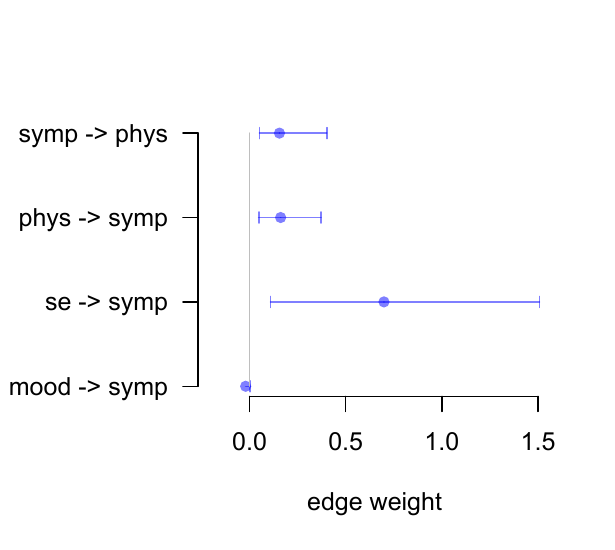}\\
(a) &(b)
\end{tabular}
\caption{(a) Graph of the time-series data of a single individual. The variables are each sum scores of several items: \textsf{mood} has 12 items, \textsf{symp} (symptoms) has four items, \textsf{se} (self esteem) has four items, and \textsf{phys} (physical condition) has five items. (b) $95\%$ Confidence intervals of the four significant edges. }
\label{fig:groot-graph}
\end{figure}
%

\section{Conclusion and discussion}\label{sec:conclusion-discussion}\noindent
We have considered two seemingly different ways of combining different data sets in a single analysis to obtain a causal graph: The perturbation graph method and invariant causal prediction. Invariant causal prediction can be considered as a generalisation of perturbation graphs in that perturbation graphs obtain ancestors (which is equivalent to marginal invariant prediction) while invariant causal prediction obtains a set of direct causes. 

Direct application of the two-step perturbation method (making a perturbation graph and then pruning superfluous connections) has no statistical guarantees of a correct solution. For making the perturbation graph, we have shown that the conditional correlation requires separate contexts to decide whether there is a directed causal path. Furthermore, we have shown that transitive reduction does not in general prune edges such that we obtain the correct graph. Using ideas from invariant causal prediction does lead to consistent estimates of the underlying graph. But the perturbation graph method does have merit in terms of the design of experiments in psychology. And in combination with invariant causal prediction, it is possible to discover or confirm hypotheses particular pathways between variables. 

We can view the design of the perturbation method, where some of the variables are intervened on consecutively to determine its effect on other variables, as an extension of the traditional experimental designs used in psychology. The effect of the intervened variable on  the distribution of the other variables can be any kind of change (e.g., mean, variance, shape, etc.). Also, the method is multivariate in nature, as the effects of the intervention are registered at all variables simultaneously. The design is appropriate for different kinds of experiments in psychology. There is a striking resemblence with some areas of brain research, where functional magnetic resonance imaging is used. Change in one brain area, possibly due to some task, is considered a source (ancestor) of other brain areas. Correlational measures are then used to determine which brain areas are effected by the source \citep{Roebroeck:2009}. Applying invariant causal prediction will benefit such results possibly identifying direct causes instead of just ancestors. 

The perturbation graph in combination with invariant causal prediction as described and applied above also has limitations; we discuss two. First, in the attitude on meat-eating experiment we tacitly assumed that each variable was intervened on in succession by the scenarios presented to the participants. Of course we have no guarantee that with each scenario exactly one variable was intervened upon; we could have intervened on several simultaneously \citep[fat-handed interventions, see e.g.,][]{Eberhardt:2007, Eronen:2020}. Although not ideal, it is still possible to obtain some causal relations. In soft interventions (where the variables are not completely under control) the different interventions could be pooled (it can be tested whether this is appropriate) and compared with the observation data; it is then still possible to determine several causal relations \citep[][Section 4.3]{Peters:2015}. 

Second, in the meat eating example we intervened on \textit{all} variables in succession and measured the effects. In many settings it is likely that at least one of the variables cannot be intervened on. For instance, if a patient diagnosed with major depression disorder ruminates before bedtime and therefore sleeps poorly, we would have great difficulty to intervene directly on rumination; and so we would not be able to determine the effects if we could change rumination. It turns out that not all experiments on all variables are required to obtain the correct causal graph. If we had three variables, then we need to intervene on only two variables; more generally, \citet[Theorem 3.3]{Eberhardt:2005} prove that for $m$ variables we need at most $\log(m)+1$ experiments. For example, with $10$ variables, we need to intervene on at most four variables. So, even though in general we will not be able to intervene on all variables, this is not necessary to obtain a correct causal graph.

All in all, the combination of perturbation graphs and invariant causal prediction seems to be a promising route in discovering and possibly testing (confirming) not only causal relations in psychology but also causal pathways or mechanisms.


\appendix

\section{Appendix: Graphical models and assumptions}\label{sec:appendix-gm-assumptions}\noindent
We use the causal Markov condition to connect graphs and probability distributions using the relation between $d$-separation and conditional independence, as described in \citet{Pearl:2000}, \citet{Pearl:1992}, \citet{Spirtes:1993}, \citet{Peters:2017} and \citet{Maathuis:2018}. Using this allows us to identify the nodes in graph with the random variables $X=(X_1,X_2,\ldots,X_m)$, which together constitute a graphical model. The edges in a graphical model are defined by an association, like partial correlations for multivariate normal variables. 

The $d$-separation relation in graph $\mathcal{G}$ corresponds to a probability distribution $\pr$ (with density $p$ if it exists) via the Markov condition.  Let $X_s$, with $s=1,2,\ldots,m$, be random variables corresponding to the nodes of the graph in $V$. We denote by $X=(X_1,X_2,\ldots, X_m)$ an $m$-dimensional vector of random variables corresponding to the $m$ nodes with density $p$ (we use $p$ to denote the joint as well as the marginal densities). A subset of random variables is denoted by $X_A$, i.e. the set of variables $(X_s, s\in A)$. Random variables $X_s$ and $X_t$ are independent if, for density $p$, for all $x_s$ and $x_t$, $p(x_s,x_t)=p(x_s)p(x_t)$; and the random variables $X_s$ and $X_t$ are conditionally independent given $X_u$ if $p(x_s,x_t\mid x_u)=p(x_s\mid x_u)p(x_t\mid x_u)$ for all $x_s$, $x_t$ and $x_u$. For disjoint sets of nodes $A$, $B$ and $C$, we have the random vectors $X_{A}$, $X_{B}$ and $X_{C}$, and we denote that $X_{A}$ is conditionally independent of $X_{B}$ given $X_{C}$ by $X_{A}\independent X_{B}\mid X_{C}$ \citep{Dawid79}, meaning that $p(x_A,x_B \mid x_C)=p(x_A\mid x_C)p(x_B\mid x_C)$, where $x_A$ are the $x_i$ such that $i\in A$.Then the assumption that connects a graph to a probability distribution is the causal Markov condition. 

\begin{assumption}{\em (Causal Markov condition)}\label{ass:markov}
We assume that for all disjoint sets of variables $A$, $B$, and $C$ the {\em causal Markov condition} is satisfied, which specifies that 
\begin{align}
A\independent B\mid C \quad\implies\quad X_{A} \independent X_{B}\mid X_{C}
\end{align}
\end{assumption}
For example, the graph in Figure \ref{fig:d-separation-example}(a) has the $d$-separations $u\independent v$ and $\{u,v\}\independent t\mid s$. By the causal Markov condition this implies that any distribution $\mathbb{P}$ compatible with this graph must have the conditional independencies $X_u\independent X_v$ and $\{X_u,X_v\} \independent X_t\mid X_s$.

Unfortunately, the causal Markov condition does not always lead to a unique distribution. If we consider the subgraph $u\to s\to t$ for the moment, we see that the Markov condition implies that for any distribution the conditional independence $X_u\independent X_t\mid X_s$ must hold. Then the Markov factorisation (see Appendix \ref{sec:appendix-gm-assumptions}) can be both $p(x_u)p(x_s\mid x_u)p(x_t\mid x_s)$ and $p(x_t)p(x_s\mid x_t)p(x_u\mid x_s)$. 
\citet[][Theorem 1]{Verma:1991} showed that such an equivalence can be characterised by having the same $d$-separations. For example, $s\to u\to t$ is Markov equivalent to $s\leftarrow u\leftarrow t$ and $s\leftarrow u\to t$; these structures have the same $d$-separation $s\independent t\mid u$. In contrast, the collider structure $s\to u\leftarrow t$ has the $d$-separation $s\independent t$ \citep{Dawid:2015}. This implies, unfortunately, that without additional information, it is impossible to obtain all directions of the edges \citep{Pearl:2000}. This is why it is relevant to include interventional (perturbation) data, as we will see.

The causal Markov condition \ref{ass:markov} is equivalent to a factorisation of the joint distribution (if a density exists) where only the parents of each node are involved.  Let $X=x$ be a particular instantiation of the $p$ random variables in $X$. Then the \textit{Markov factorisation} with respect to a graph states that the distribution $p(x)$ can be represented by the product of conditional densities $p(x_s\mid x_{\textsc{pa}(s)})$, using only the variables that correspond to the parents of each of the nodes
\begin{align}\label{eq:markov-facorisation}
p(x) = p(x_{1}\mid x_{\textsc{pa}(1)})p(x_{2}\mid x_{\textsc{pa}(2)})\cdots p(x_{n}\mid x_{\textsc{pa}(m)})
\end{align}
\citep[For a proof see][Theorem 3.27.]{Lauritzen96} This factorisation is extremely useful, not only because of computational convenience, but also to define the intervention distribution (see \hyperref[sec:appendix-interventions]{Appendix \ref*{sec:appendix-interventions}}). To illustrate the Markov factorisation, consider the graph in Figure \ref{fig:d-separation-example}(a). By the factorisation we can determine the probability of each $X_s$ by considering $X_{\textsc{pa}(s)}$.  The nodes $u$ and $v$ have no parents, and so these probabilities will be a function of $X_u$ and $X_v$ alone. The parents of $s$ are $u$ and $v$, and so the probability of $X_s$ will be conditional on $X_u$ and $X_v$. And, finally, the probability of $X_t$ depends on $X_s$, since $s$ is a parent of $t$. Then the graph in Figure \ref{fig:d-separation-example}(a) can be represented as
\begin{align*}
p(x_v)p(x_u)p(x_s\mid x_u,x_v)p(x_t\mid x_s)
\end{align*}

The causal Markov condition in \ref{ass:markov} allows us to conclude that $d$-separations imply conditional independencies, but also the contrapositive, that if we find conditional dependencies, then this implies $d$-connectedness. However, the causal Markov condition does \textit{not} allow us to infer:  conditional independence implies $d$-separation \citep[][Chapter 18]{Maathuis:2018}. Hence, we cannot obtain $d$-separations by considering the distribution. Therefore, we require that all conditional independencies are covered by the causal Markov condition. This assumption is known as the {\em causal faithfulness condition}.
\begin{assumption}{\em (Causal faithfulness condition)}\label{ass:faith}
We assume that for disjoint sets of variables $A$, $B$, and $C$ the {\em causal faithfulness condition} is satisfied, which specifies that 
\begin{align}
A \independent B\mid C \quad\Longleftarrow\quad X_{A}\independent X_{B}\mid X_{C}
\end{align}
\end{assumption}
The faithfulness condition is violated more easily than the Markov condition. For instance, in a linear model for $s\to u\to t$ and $s\to t$ such that $\beta_{us}\beta_{tu}=\beta_{ts}$where $\beta_{ij}$ is the regression coefficient from $X_j$ to $X_i$ (see below), then the faithfulness condition is violated because $X_{s}\independent X_{t}$ but $s\not\independent t$ \citep{Pearl:2000,Buhlmann:2011,Peters:2017}.  Although the faithfulness condition can be violated, the occurrence of this violation has measure zero, as violation entails exact cancellation of effects (see Spirtes et al. \citeyear[][Theorem 3.2]{Spirtes:1993}, and B\"{u}hlmann and van de Geer \citeyear[][Section 13.6.1]{Buhlmann:2011}, but also \citet{Uhler:2013} for strong faithfulness). Faithfulness is sometimes referred to as stability \citep{Pearl:2000}. A slightly weaker assumption called minimality could also be used instead of faithfulness \citep[see][Proposition 6.36]{Peters:2017}, but we will not use this here. 


\section{Appendix: Interventions and context}\label{sec:appendix-interventions}\noindent
One of the key ideas for perturbation graphs and causal graphs is that it is possible to differentiate between conventional and interventional conditioning. Conventional conditioning refers to the idea that given specific information the probability should be modified, i.e., given subsets of nodes $A$ and $B$ the probability of $x_{A}$ given $x_{B}$ is \citep{Lauritzen:2001}
\begin{align*}
p(x_{A}\mid x_{B}) = \frac{p (x_{A}, x_{B})}{p(x_{B})}
\end{align*}
In contrast, interventional conditioning refers to the idea that there is some context where an experiment was conducted and an intervention was applied to alter the distribution. For instance, to improve math scores a teacher may decide to change teaching method, and thus may change (on average) the math score on a test in her class. Hence, the value for some random variable is changed as a result of the change of teaching method. 

We use the framework of \citet{Mooij:2020a} and augment the graph $\mathcal{G}=(V,E)$ with context nodes $K$ and edges such that we only allow $k\to i$ with $k\in K$ and $i\in V$, and not vice versa, i.e., $k\leftarrow i$ is not allowed. The augmented graph $\mathcal{G}^{a}=(V,K,E^{a})$, with $E^{a}=E\cup \{k\to i: k\in K, i\in V\}$. The nodes in $V$ are referred to as system nodes and the nodes in $K$ as context nodes. Associated with the context nodes are context variables $C_{k}$ for $k\in K$ with their distribution. We use $C_k=\emptyset$ to indicate that no intervention took place, but sometimes use $C$ for the context variable without subscript if the it is clear what context is referred to. 
\subsection{Hard interventions}\label{sec:hard-interventions}\noindent
Let $C$ be a random variable which indexes a specific experimental (or control) condition, as in Section \ref{sec:graphs}. The distribution $p(C=c)$ could be induced by experimental design. For example, to compare two methods of teaching math, method 1 or 2, a group of pupils could be randomly assigned to one of the two teaching methods obtaining $p(C=1)=p(C=2)=\tfrac{1}{2}$. Randomly assigning a pupil to method 1 or 2 is different from simply observing that a pupil is taught method 1 or 2, in the sense that other influences (confounders) like the parental guidance and stimulation to do math at home will be (more or less) equally distributed among the two groups. If there was no random assignment, then such influences would still be present. So, randomisation implies that the outcome cannot be a cause of assignment to a group (because assignment precedes the outcome) and there are no confounds between the assignment and outcome \citep{Sobel:2009}. Under these assumptions \citet[][Proposition 10]{Mooij:2020a} show that such a simple design with randomisation results in the correct causal inference that the teaching method causes the outcome (dependence between teaching method and outcome).

We have derived what is often referred to as a hard intervention, where the effects of other variables that affect the cause are removed so that all possible confounds have been eliminated. To denote interventional conditioning we use the notation $p(x\mid\mid C=c)$ \citep{Lauritzen:2002}. We write $C=\emptyset$ to indicate no intervention and so the observational distribution obtains \citep{Mooij:2020a}, i.e., $p(x\mid\mid C=\emptyset)=p(x\mid C=\emptyset)$. Since we are interested in interventions on specific nodes, we use $C=s$ to denote that we have applied an intervention to node $s$.

\begin{figure}[t]\centering
\begin{tabular}{c @{\hspace{5em}} c}

\begin{tikzpicture}[->,auto,node distance=2cm, 
  thick,main node/.style={circle,draw,font=\footnotesize\sffamily}]
  \tikzstyle{sample} = [circle,draw,font=\footnotesize\sffamily,minimum size=2.5em]
  \tikzstyle{sampleEdge} = [font=\sffamily\small]
	
  \node[sample] (t) {$X_{t}$};
  \node[sample] (u) [above left of=t] {$X_{u}$};
  \node[sample] (s) [below left of=u] {$X_{s}$};
  \node[sample,dashed] (c) [above left of=s] {$C$};

 \path[sampleEdge,<-]
    (t) edge node [left] {} (u);
 \path[sampleEdge,<-]
    (t) edge node [left] {} (s); 
 \path[sampleEdge,<-]
    (s) edge[] node [left] {} (u); 
 \path[sampleEdge,->,dashed]
    (c) edge[] node [left] {} (s); 
\end{tikzpicture}

&

\begin{tikzpicture}[->,auto,node distance=2cm,
  thick,main node/.style={circle,draw,font=\footnotesize\sffamily}]
  \tikzstyle{sample} = [circle,draw,font=\footnotesize\sffamily,minimum size=2.5em]
  \tikzstyle{sampleEdge} = [font=\sffamily\small]
	
  \node[sample] (t) {$X_{t}$};
  \node[sample] (u) [above left of=t] {$X_{u}$};
  \node[sample] (s) [below left of=u] {$X_{s}$};
    \node[sample] (c) [above left of=s] {$C$};
  
 \path[sampleEdge,<-]
    (t) edge node [left] {} (u);
 \path[sampleEdge,<-]
    (t) edge node [left] {} (s); 
 \path[sampleEdge,->,dashed]
    (u) edge node [left] {} (s) ; 
 \path[sampleEdge,->]
    (c) edge[] node [left] {} (s); 
\end{tikzpicture}

\\

\qquad $C=\emptyset$ & \qquad $C=1, 2$

\end{tabular}
\caption{(a) Hard intervention graph where the context variable is set to $C=\emptyset$ indicating no intervention, indicated by the dashed node and edge from $C$ to $X_{s}$, and so the observational distribution obtains. (b) Hard intervention where $C=1$ or 2 determines completely the influence on $X_{s}$ and so the effect of $X_{u}$ on $X_{s}$ is removed, indicated by the dashed edge. }\label{fig:intervention-graph}
\end{figure}

Consider Figure \ref{fig:intervention-graph} with the system variables $X_{u}$, $X_{s}$ and $X_{t}$ and the context variable $C$. The context variable $C$, in this example, can be $\emptyset$ (no intervention), and $1$ (e.g., group 1) or $2$ (e.g., group 2). Using the Markov factorisation in (\ref{eq:markov-facorisation}) we obtain the observational distribution when we set $C=\emptyset$, where nothing in the graph is changed 
\begin{align*}
p(x\mid\mid C=\emptyset)=p(x_{t}\mid x_{u},x_{s})p(x_{s}\mid x_{u}\mid\mid C=\emptyset)p(x_{u})
\end{align*}
Because we have $C=\emptyset$, we obtain 
\begin{align*}
p(x_{s}\mid x_{u}\mid\mid C=\emptyset)=p(x_{s}\mid x_{u}, C=\emptyset)\quad \text{so that} \quad p(x\mid\mid C=\emptyset)=p(x\mid C=\emptyset)
\end{align*}
In words, the observational distribution is the same as the conventional conditional distribution. This corresponds to Figure \ref{fig:intervention-graph}(a), where the presence of the edge $X_{u}\to X_{s}$ indicates no effect of $C$ on $X_{s}$. When $C=s$, however, we obtain the intervention distribution
\begin{align*}
p(x\mid\mid C=s)=p(x_{t}\mid x_{u},x_{s})p(x_{s}\mid x_{u}\mid\mid C=s)p(x_{u})
\end{align*}
Because $C=s$ and we intervene on $X_{s}$, the effect of $X_{u}$ on $X_{s}$ is removed, i.e., here we have that 
\begin{align*}
p(x_{s}\mid x_{u}\mid\mid C=s)=p(x_{s}\mid C=s)\quad\text{and so}\quad p(x\mid\mid C=s)\ne p(x\mid C=s)
\end{align*}
Hence, in the intervention distribution when $C=s$ we have that the parent of $s$, $\textsc{pa}(s)=\{u\}$, has been removed from the conditional probability. This corresponds to Figure \ref{fig:intervention-graph}(b), where the dashed edge $X_{u}\to X_{s}$ indicates that there is an intervention on $X_{s}$ from $C=s$ such that the effect of $X_{u}$ on $X_{s}$ has been removed. 
These considerations lead to the following definition. 
\begin{definition}{\em (Hard intervention)}\label{def:hard}
Let $\mathcal{G}^{a}$ be an augmented graph with system nodes $j$ in $V$ and context nodes $k$ in $K$ corresponding to context variable $C_k$, such that $k\to j$ only and not vice versa. A {\em hard intervention} where $C_{k}\ne \emptyset$, specifies that $p(x_{j}\mid x_{\textsc{pa}(j)}\mid\mid c_{k})=p(x_{j}\mid c_{k})$ such that there is some $k\in K$ with $k\to j$. All other terms in Markov factorisation of the joint distribution remain the same as without intervention. The \emph{hard intervention} distribution is
\begin{align}\label{eq:distribution-hard-intervention}
p(x\mid\mid c) = \prod_{j\in V\backslash \textsc{ch}(K)}p(x_{j}\mid x_{\textsc{pa}(j)})\prod_{i\in \textsc{ch}(K)}p(x_{i}\mid c_{\textsc{pa}(i)})
\end{align}
\end{definition}
Note that a hard intervention only specifies that the distribution of $X_{j}\mid\mid X_{\textsc{pa}(j)}, C_{k}$ is replaced by the distribution of $X_{j}\mid C_{k}$, but the variable is not replaced by a single value (almost surely with respect to $\pr$) as is common in, for instance, \citet{Pearl:2000} and \citet{Lauritzen:2001}. 
Such a hard intervention is also sometimes known as an \textit{ideal} \citep{Spirtes:1993} or a \textit{surgical} \citep{Pearl:2000} or a \textit{perfect} \citep{Mooij:2020a} intervention. We can think of our intervention as being not completely successful \citep{Mooij:2020a} or if there is an interaction between the intervention and observational unit such that small differences occur (noise, as is often assumed in the social sciences). 

The context variable is analogous to using dummy variables in statistics, in, for instance, analysis of variance (anova). In an anova the outcome is regressed on $C$ such that the regression coefficients result in the difference between two conditions (if $C$ is coded to 0 or 1) \citep{Agresti:1997}.

Using the linear Gaussian model (\ref{eq:true-model}) we can rewrite the model associated with Figure \ref{fig:intervention-graph}. We have for the observational setting $C=\emptyset$ the relations 
\begin{align*}
X_{u}&=\varepsilon_{u}
	&X_{s} &= \beta_{su}X_{u} + \varepsilon_{s}
	&X_{t} &= \beta_{ts}X_{s} + \beta_{tu}X_{u} + \varepsilon_{t}
\end{align*}
where the $\varepsilon_{i}$ are independently normally distributed with mean 0 and variance $\sigma^{2}$. We clearly have that $p(x_{t}\mid x_{s}\mid\mid C=\emptyset)=p(x_{t}\mid x_{s}\mid C=\emptyset)$, since there is no intervention.

When $C=s$ we intervene on $X_{s}$ by replacing {\color{red} it} by the random variable $W\in \mathbb{R}$ with mean $\mu_{w}\ne 0$ and variance $\sigma^{2}_{w}>0$ and obtain $X_{s} = W$. 
Then we find for the expectation of $X_{t}$
\begin{align*}
\expt(X_{t}\mid\mid C=s)=\beta_{ts}\mu_{w}\ne 0=\expt(X_{t}\mid\mid C=\emptyset)
\end{align*}
and for the variance 
\begin{align*}
\text{var}(X_{t}\mid\mid C=s)&=\beta_{ts}^{2}\sigma^{2}_{w}+(1+\beta_{tu}^{2})\sigma^{2}\\
&\ne
	(\beta_{ts}^{2}\beta_{su}^{2} +\beta_{tu}^{2}+1)\sigma^{2}
	=\text{var}(X_{t}\mid\mid C=\emptyset)
\end{align*}
We then see that the mean and variance of node $t$ are different for the observational and experimental conditions, and indicates that $p(x_{t}\mid x_{s}\mid\mid C=\emptyset)\ne p(x_{t}\mid x_{s}\mid\mid C=s)$.

\subsection{Soft interventions}\label{sec:soft-interventions}\noindent
A weaker form of intervention, discussed by \citet{Eberhardt:2007}, is one that changes the distribution of the variable on which is intervened but does not remove edges that are incident on the intervention node and leaves the causal structure intact. Such interventions are called \textit{soft} or \textit{parametric} interventions. An example is the shift intervention \citep[eg., ][]{Rothenhausler:2018} where a random variable shifts the mean and variance but does not remove connectivity with other variables. Because the structure remains the same, it is implied by a soft intervention that there is no complete control; other variables have an effect on the intervention node. The main advantage is efficiency to discover causal relations. A disadvantage is that causal effects cannot be determined as easily \citep{Eberhardt:2005}. 

Consider Figure \ref{fig:intervention-graph}(a), where we intervene on variable $X_{s}$ by $C$. Intervening does not lead to the removal of the edge $X_{u}\to X_{s}$, as was the case with a hard intervention, but the distribution of $X_{s}$ is changed with respect to no intervention. When $C=s$ in Figure \ref{fig:intervention-graph}(a) we obtain the intervention distribution
\begin{align*}
p(x\mid\mid C=s)=p(x_{t}\mid x_{u},x_{s})p(x_{s}\mid x_{u}\mid\mid C=s)p(x_{u})
\end{align*}
In a soft intervention we leave the graph structure as it is and simply identify that the observation was obtained in the context of $C=s$, and so 
\begin{align*}
p(x_{s}\mid x_{u}\mid\mid C=s)=p(x_{s}\mid x_{u},C=s)
\end{align*}
An effect can be obtained by considering whether 
\begin{align*}
p(x_{s}\mid x_{u}, C=\emptyset)=p(x_{s}\mid x_{u},C=s)\quad \text{or}\quad p(x_{s}\mid C=\emptyset)=p(x_{s}\mid C=s)
\end{align*}
It should be clear that since in Figure \ref{fig:intervention-graph}(a) the edge $X_{u}\to X_{s}$ remains, there is a confound to infer the causal effect $X_{s}\to X_{t}$. This means that soft interventions are useful to discover the structure of a graph, they are less useful in determining causal effects. 
This brings us to the the following definition. 

\begin{definition}{\em (Soft intervention)}
Let $\mathcal{G}^{a}$ be an augmented graph with system nodes $j$ in $V$ and context nodes $k$ in $K$ corresponding to context variable $C_k$, such that $k\to j$ only and not vice versa.  A {\em soft intervention} where $C_{k}\ne \emptyset$, specifies that $p(x_{j}\mid x_{\textsc{pa}(j)}\mid\mid c_{k})=p(x_{j}\mid x_{\textsc{pa}(j)},c_{k})$ such that there is some $k\in K$ with $k\to j$. All other terms in the Markov factorisation of the joint distribution remain the same as without intervention. The \emph{soft intervention distribution} is
\begin{align}\label{eq:distribution-soft-intervention}
p(x\mid\mid C=s) = \prod_{j\in V\backslash \textsc{ch}(K)}p(x_{j}\mid x_{\text{\rm pa}(j)})\prod_{i\in \textsc{ch}(K)}p(x_{i} \mid x_{\text{\rm pa}(i)},c_{k})
\end{align}
\end{definition}

Using the linear Gaussian model (\ref{eq:true-model}) in the graph of Figure \ref{fig:intervention-graph}(a), a soft intervention on node $s$ changes (e.g., shifts) $X_{s}$ by the random variable $W\in\mathbb{R}$ with non-zero expectation $\mu_{w}\ne 0$ and variance $\sigma^{2}_{w}>0$, such that  when $C=s$ we have $X_{s} =W + \beta_{su}X_{u}+\varepsilon_{s}$.
Then we find for the expectation of node $t$
\begin{align*}
\expt(X_{t}\mid\mid C=s)=\beta_{ts}\mu_{w}\ne 0=\expt(X_{t}\mid\mid C=\emptyset)
\end{align*}
and for the variance
\begin{align*}
\text{var}(X_{t}\mid\mid C=s)&=\beta_{ts}^{2}\sigma^{2}_{w}+(\beta_{ts}^{2}\beta_{su}^{2} +\beta_{tu}^{2}+1)\sigma^{2}\\
&\ne
	(\beta_{ts}^{2}\beta_{su}^{2} +\beta_{tu}^{2}+1)\sigma^{2}
	=\text{var}(X_{t}\mid\mid C=\emptyset)
\end{align*}
We then see that the structure of the graph remains unchanged and that the conditional distributions of the different environments have changed, $p(x_{t}\mid x_{s}\mid\mid C=s )\ne p(x_{t}\mid x_{s}\mid\mid C=\emptyset)$. 

Both the hard and soft intervention induce a change in the distribution, and hence result in observable differences. Of course, differences in the population need to outweigh the noise level in the measurements to be able to be detected. But in principle, both a soft and a hard intervention lead to observable changes in the distribution that may lead to a causal conclusion. 

\section{Appendix: Calculations of conditional distributions and correlations from Section \ref{sec:cond-cor}}\label{sec:appendix-calculations-correlations}\noindent
Consider the graph $s\to u\to t$, with associated variables $X_{s}$, $X_{u}$ and $X_{t}$. Using the linear model we have the equations 
\begin{align*}
X_s = e_s		\qquad X_u = \beta_{us}X_s + e_u	\qquad	X_t = \beta_{tu}X_u + e_t
\end{align*}
We are assuming here that the residuals $e_s$, $e_u$ and $e_t$ are independent and are $N(0,1)$ random variables, so that $X_s$, $X_u$ and $X_t$ are normally distributed. 

We start with rewriting the equations, which yields
\begin{align*}
X_s = e_s		\qquad X_u = \beta_{us}e_s + e_u	\qquad	X_t = \beta_{tu}\beta_{us}e_s + \beta_{tu}e_u + e_t
\end{align*}
The distribution $X_t\mid\mid C=\emptyset$ for the observational context is then \citep[see, e.g.,][Chapter 5]{BilodeauBrenner99}
\begin{align*}
X_t\mid\mid C=\emptyset \sim N(0, \beta_{tu}^2\beta_{us}^2+\beta_{tu}^2+1)
\end{align*}
In the example we have $\beta_{tu}=0.9$ and $\beta_{us}=1.8$, so that we obtain in the observational context the mean 0 and variance 4.4344. For the hard intervention, we have that $X_s=W=2+N(0,\sigma^2_w)$, and so $\mathbb{E}(X_t\mid\mid C=s)=2\beta_{tu}\beta_{us}$ (where the 2 is obtained from $W$) and variance $ \beta_{tu}^2\beta_{us}^2+\beta_{tu}^2\sigma^2_w+1$. With the data from the example, where $\sigma^2_w=1$, we the obtain the mean with the intervention on $s$, $3.24$, and the variance is 4.4344.

Next we discuss the correlation.We start with the observational context $C=\emptyset$. For the covariance between $X_s$ and $X_t$, using the independence between $e_s$, $e_u$ and $e_t$, and the variance being 1 for each, gives
\begin{align*}
\text{cov}(X_{s},X_{t}\mid\mid C=\emptyset) = \beta_{tu}\beta_{us}
\end{align*}
Noting that the variance of $X_s$ is 1 and that of $X_t$ is $ \beta_{tu}^{2}\beta_{us}^{2}+\beta_{tu}^{2}+1 $, gives the correlation
\begin{align*}
\text{cor}(X_{s},X_{t}\mid\mid C=\emptyset) = \frac{\beta_{tu}\beta_{us}} {\sqrt{ \beta_{tu}^{2}\beta_{us}^{2}+\beta_{tu}^{2}+1 }}
\end{align*}
With the data of the example, we obtain a correlation in the observational context of 0.7693. With a hard intervention on $s$ (where $X_{s}$ is replaced by $W= 2 + N(0,\sigma^{2}_{w})$) we obtain the correlation 
\begin{align*}
\text{cor}(X_{s},X_{t}\mid\mid C=s) = \frac{\beta_{tu}\beta_{us}\sigma_{w}} {\sqrt{ \beta_{tu}^{2}\beta_{us}^{2}\sigma_{w}^{2}+\beta_{tu}^{2}+1 }}
\end{align*}
We see that the correlation in the observational context equals that of the interventional context when we standardise $W$ such that $\sigma^{2}_{w}=1$. Note that a deterministic hard intervention leads to an undefined correlation. With the data from the example we obtain the same correlation in the observational as in the hard intervention context (because we chose $\sigma^2_w=1$).

Similarly, consider now the graph $s\leftarrow u \to t$. We have the equations 
\begin{align*}
X_s = \beta_{su}X_u + e_s		\qquad X_u =  e_u	\qquad	X_t = \beta_{tu}X_u + e_t
\end{align*}
or, applying substitution
\begin{align*}
X_s = \beta_{su}e_u + e_s		\qquad X_u =  e_u	\qquad	X_t = \beta_{tu}e_u + e_t
\end{align*}
The mean of $X_t$ in the observational context $C=\emptyset$ is 0 and the variance $\beta_{tu}^2$, which for the data in the example is 0.81. For the intervention $C=s$ with $X_s=2+N(0,\sigma^2_w)$ we obtain a mean of 0 for $X_t$. This is because $X_s$ does not function in $X_t$ in the graph $s\leftarrow u\to t$. 

Continuing with the correlation, the covariance when $C=\emptyset$ is 
\begin{align*}
\text{cov}(X_{s},X_{t}\mid\mid C=\emptyset) = \beta_{su}\beta_{tu}
\end{align*}
But applying a hard intervention, substituting $W=2+N(0,\sigma^2_w)$ for $X_s$ gives a covariance of 0 because $W$ is independent of $X_u=e_u$ and so also independent of $X_t = \beta_{tu}e_u + e_t$. Hence, a hard intervention on $s$ applied to the graph $s\leftarrow u\to t$, results in the correlation  $\text{cor}(X_{s},X_{t}\mid\mid C=s) =0$.

\section{Appendix: Conditional correlation}\label{sec:appendix-conditional-correlation}\noindent
The conditional correlation is the correlation obtained by considering several contexts simultaneously, like $C=\emptyset$ and $C=s$. (The case of a single context was discussed in Appendix \ref{sec:appendix-interventions}.) 

The probability distribution in the case of two or more contexts is a mixture distribution in general. For example, suppose we have the  distribution $\mathbb{P}$ and that we consider two contexts, $C=\emptyset$ and $C=s$. In this case the mixture distribution for $X_s=x_s$ and $X_t=x_t$ is 
\begin{align*}
\mathbb{P}(x_s,x_t\mid\mid \{\emptyset,s\}) = \pi_\emptyset \mathbb{P}(x_s,x_t\mid\mid C=\emptyset) + \pi_s \mathbb{P}(x_s,x_t\mid\mid C=s)
\end{align*}
wehere $\pi_\emptyset\ge 0$, $\pi_s\ge 0$ and $\pi_\emptyset+\pi_s=1$. The mixture coefficients $\pi_\emptyset$ and $\pi_s$ can be interpreted as the proportion of observations in each of the respective contexts $C=\emptyset$ and $C=s$ \citep[see also][Section 3.2]{Peters:2015}.  

With this mixture probability the mean $\mu_{s}(\{\emptyset,s\})$ across the two different contexts is
\begin{align*}
\mathbb{E}(X_s\mid\mid \{\emptyset,s\})
&=\int_{\mathbb{R}^2} x_s d\mathbb{P}(x_s\mid\mid \{\emptyset,s\})\\
&= \pi_\emptyset \int_{\mathbb{R}^2} x_s d\mathbb{P}(x_s\mid\mid C=\emptyset) + \pi_s \int_{\mathbb{R}^2} x_s d\mathbb{P}(x_s\mid\mid C=s)\\
 &=\pi_\emptyset\mathbb{E}(X_s\mid\mid C=\emptyset) +  \pi_s \mathbb{E}(X_s\mid\mid C=s)
\end{align*}
The conditional covariance for the two contexts is then
\begin{align*}
\text{cov}(X_s,X_t\mid\mid \{\emptyset,s\}) = \int_{\mathbb{R}^2} (x_s-\mu_s(\{\emptyset,s\}))(x_t-\mu_t(\{\emptyset,s\})) d\mathbb{P}(x_s,x_t\mid\mid \{\emptyset,s\})
\end{align*}
where $\mu_s(\{\emptyset,s\})$ is the mean of $X_s$ based on the contexts $C=\emptyset$ and $C=s$. Applying the mixture distribution above yields
\begin{align*}
\text{cov}(X_s,X_t\mid\mid \{\emptyset,s\})
&=\mathbb{E}(X_sX_t\mid\mid \{\emptyset,s\}) - \mathbb{E}(X_s\mid\mid \{\emptyset,s\})\mathbb{E}(X_t\mid\mid \{\emptyset,s\})\\
&=\pi_\emptyset\left(\text{cov}(X_s,X_t\mid\mid \emptyset)+\mathbb{E}(X_s\mid\mid \emptyset)\mathbb{E}(X_t\mid\mid \emptyset)\right)\\
&\quad + \pi_s\left(\text{cov}(X_s,X_t\mid\mid s)+\mathbb{E}(X_s\mid\mid s)\mathbb{E}(X_t\mid\mid s)\right)\\
&\quad -\mathbb{E}(X_s\mid\mid \{\emptyset,s\})\mathbb{E}(X_t\mid\mid \{\emptyset,s\})
\end{align*}
The variance (and standard deviation) is defined similarly with $t=s$, and then the correlation is obtained in the usual way 
\begin{align*}
\rho_{ts\mid\mid \{\emptyset,s\}}=\frac{\text{cov}(X_s,X_t\mid\mid \{\emptyset,s\})} { \text{sd}(X_s\mid\mid \{\emptyset,s\}) \text{sd}(X_t\mid\mid \{\emptyset,s\}) }
\end{align*}
%

\section{Appendix: Additional lemmas}\label{sec:appendix-lemmas}\noindent

\begin{lemma}{\rm (Reichenbach's principle)}\label{lem:direct-path}
Let $\mathcal{G}$ be a directed acyclic graph with probability distribution $\mathbb{P}$ that is Markov and faithful to $G$. Suppose that for nodes $s$ and $t$ in $\mathcal{G}$ there is a non-zero correlation $\rho_{st}\ne 0$. Then there is a directed path $s\to \cdots \to t$  from $s$ to $t$, or $s\leftarrow \cdots \leftarrow t$ from $t$ to $s$, where the intermediate node set could be empty (a direct connection), or there is a common cause such that $s\leftarrow \cdots \to t$. 
\end{lemma}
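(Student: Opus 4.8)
The plan is to translate the probabilistic hypothesis into a graphical one and then enumerate the possible shapes of the connecting path. First I would note that a non-zero correlation forces genuine dependence: if $X_{s}$ and $X_{t}$ were independent their covariance, and hence $\rho_{st}$, would vanish, so $\rho_{st}\ne 0$ gives $X_{s}\not\independent X_{t}$ marginally (conditioning set $C=\emptyset$). Applying the contrapositive of the Markov condition (Assumption \ref{ass:markov}), this dependence rules out $s\independent t$, so $s$ and $t$ are $d$-connected given the empty set. It is worth noting that this direction uses only the Markov condition; faithfulness would be needed for the converse implication, that a connecting path forces $\rho_{st}\ne 0$, which is not claimed here.

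Next I would unpack ``$d$-connected given $\emptyset$'' in terms of paths. By the equivalence between the moral-ancestral-graph definition of $d$-separation used in the text and the blocking/active-path criterion (Lauritzen, Proposition 3.25, already cited), $s$ and $t$ are $d$-connected given $\emptyset$ if and only if there is an active path between them relative to the empty set. With nothing available for conditioning, a path is active exactly when it contains no collider, since a collider $a\to m\leftarrow b$ blocks its path unless some descendant of $m$ lies in the conditioning set. Hence there exists a collider-free path $\pi$ joining $s$ and $t$.

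The substance of the lemma is then a short orientation argument on $\pi$. Writing $\pi$ as $s=w_{0},w_{1},\ldots,w_{n}=t$, label each edge, read from $s$ to $t$, as forward ($\to$) or backward ($\leftarrow$); a collider at an internal node $w_{i}$ is precisely a forward edge $w_{i-1}\to w_{i}$ immediately followed by a backward edge $w_{i}\leftarrow w_{i+1}$. The absence of colliders therefore means the orientation sequence never switches from forward back to backward, so it is a block of backward edges followed by a block of forward edges. This yields exactly three cases: all edges forward, giving the directed path $s\to\cdots\to t$; all edges backward, giving $s\leftarrow\cdots\leftarrow t$; or a single changeover at an internal node $m$, which is then a fork $s\leftarrow\cdots\leftarrow m\to\cdots\to t$ with directed subpaths to both endpoints, exhibiting $m$ as a common cause. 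An empty intermediate set corresponds to the direct edge $s\to t$ or $s\leftarrow t$, so these are exactly the three forms in the statement.

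Since the lemma is standard I do not expect a serious obstacle, but the point requiring the most care is the bookkeeping around the graphical characterisation. If one works directly from the moral-ancestral-graph definition rather than quoting Proposition 3.25, one must verify both that a collider-free path survives ancestralisation and moralisation (its non-collider interior nodes are ancestors of $s$ or $t$, so the path stays inside the ancestral set and no marriage edge is introduced along it) and, conversely, that a connecting path in the moral undirected graph lifts back to a collider-free path in $G$. One should also treat the degenerate cases explicitly, namely when the changeover node $m$ coincides with $s$ or $t$ (subsumed by the two directed-path cases) and when $\pi$ is a single edge. Once the equivalence is in place, the orientation argument is immediate.
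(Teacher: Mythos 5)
Your proposal is correct and follows essentially the same route as the paper's proof: non-zero correlation yields dependence, the graphical assumptions yield a connecting path, colliders are ruled out, and the collider-free path shapes are enumerated as exactly the three cases. In fact your write-up is tighter on two points where the paper is loose: the paper credits faithfulness with producing the connecting path when, as you observe, it is the contrapositive of the Markov condition that does this work, and the paper derives its contradiction from a collider on a \emph{single} path even though $d$-separation requires \emph{all} paths to be blocked, a gap your active-path argument (take a collider-free path, then do the orientation bookkeeping) closes correctly.
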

\begin{proof}
Assume we obtain a correlation $\rho_{st}\ne 0$ in the directed acyclic graph $G$. Because we assume faithfulness (Assumption \ref{ass:faith}) this implies that there must be some path between $s$ and $t$. If there is a direct connection $s\to t$ or $s\leftarrow t$, then we are done. Suppose then that there is some path between $s$ and $t$ with at least one intermediate node. Towards a contradiction, suppose that on this path between $s$ and $t$ there is a collider $s- \circ\to a \leftarrow\circ - t$. By the Markov condition (Assumption \ref{ass:markov}), this implies the independence relation $X_{s}\independent X_{t}$. But this contradicts that we already have $\rho_{st}\ne 0$. The only remaining possibilities for a DAG are 
$s\to \cdots  \to t$ or $s\leftarrow  \cdots \leftarrow t$ or $s\leftarrow\cdots\to t$.
\end{proof}
\begin{lemma}\label{lem:wiggle-s}
Let $\mathcal{G}$ be a directed acyclic graph with probability distribution $\mathbb{P}$ that is Markov and faithful. If there is a non-zero correlation $\rho_{ts}$ between $s$ and $t$, and if we intervene on node $s$ in $\mathcal{G}$, then we observe a change in the distribution of $t$ only if there is some directed path $s\to\cdots\to t$. 
\end{lemma}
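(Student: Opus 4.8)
The plan is to prove the contrapositive: assuming there is \emph{no} directed path $s\to\cdots\to t$, I would show that intervening on $s$ leaves the marginal law of $X_{t}$ unchanged, so that any observed change in the distribution of $t$ must come from a directed path. The correlation hypothesis $\rho_{ts}\ne 0$ enters only to invoke Reichenbach's principle (\hyperref[lem:direct-path]{Lemma \ref*{lem:direct-path}}), which already narrows the connection between $s$ and $t$ to one of three shapes. The substantive work then consists of ruling out the reverse path $s\leftarrow\cdots\leftarrow t$ and the common cause $s\leftarrow\cdots\to t$ as possible sources of a change at $t$, since in both of these $s$ is not an ancestor of $t$.

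The key device is the modularity of the intervention factorisation. Under the Markov assumption \ref{ass:markov} the observational law factorises as in (\ref{eq:markov-facorisation}), and the hard and soft intervention laws are given by (\ref{eq:distribution-hard-intervention}) and (\ref{eq:distribution-soft-intervention}). In both, every conditional factor $p(x_{j}\mid x_{\textsc{pa}(j)})$ with $j\ne s$ is identical to its observational counterpart; only the factor attached to the intervened node $s$ is altered (its parents are dropped for a hard intervention, or its conditional is reweighted for a soft one). So passing from $C=\emptyset$ to $C=s$ changes exactly one factor, the one indexed by $s$.

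Next I would use the standard ancestral-marginalisation fact: because $\textsc{an}(t)$ is closed under taking parents and no edge runs from a non-ancestor of $t$ into $\textsc{an}(t)$, integrating the joint density over the non-ancestors of $t$ in reverse topological order collapses each such factor to $1$, leaving $p(x_{\textsc{an}(t)})=\prod_{j\in\textsc{an}(t)}p(x_{j}\mid x_{\textsc{pa}(j)})$, and hence $p(x_{t})$ depends only on the factors indexed by $\textsc{an}(t)$. If there is no directed path $s\to\cdots\to t$, then $s\notin\textsc{an}(t)$, so the single factor that the intervention modifies does not appear in the expression for the law of $X_{t}$. Therefore $p(x_{t}\mid\mid C=s)=p(x_{t}\mid\mid C=\emptyset)$ for both hard and soft interventions, i.e.\ the distribution of $t$ is invariant. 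Contraposition then forces $s\in\textsc{an}(t)$ whenever the law of $t$ changes, which is exactly the existence of a directed path $s\to\cdots\to t$; together with \hyperref[lem:direct-path]{Lemma \ref*{lem:direct-path}} this completes the argument.

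I expect the main obstacle to be making the ancestral-marginalisation step fully rigorous simultaneously for hard and soft interventions: one must verify that processing the non-ancestors of $t$ from the top of a topological order really does integrate each factor to unity without leaving residual dependence on $x_{s}$, and that dropping the parents of $s$ in the hard case (which changes what the $s$-factor conditions on) cannot smuggle $s$ into the ancestral product for $t$. Once the intervention is seen to touch only the $s$-factor, and the marginal of $t$ is seen to involve only ancestral factors, the conclusion is immediate.
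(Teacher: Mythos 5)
Your proposal is correct, but it follows a genuinely different route from the paper's own proof. The paper argues in two short steps: it invokes Reichenbach's principle (Lemma \ref{lem:direct-path}) to reduce the non-zero correlation $\rho_{ts}\ne 0$ to three path configurations, and then appeals to the faithfulness condition to conclude that an observed change at $t$ under an intervention at $s$ forces the configuration $s\to\cdots\to t$; the mechanism by which interventions on non-ancestors fail to propagate to $t$ is left implicit. You instead prove the contrapositive by a truncated-factorisation argument: both the hard intervention distribution (\ref{eq:distribution-hard-intervention}) and the soft one (\ref{eq:distribution-soft-intervention}) alter exactly one factor of the Markov factorisation (\ref{eq:markov-facorisation}), namely the factor indexed by $s$, and since $\textsc{an}(t)$ is ancestrally closed while its complement is closed under taking children, marginalising the non-ancestors of $t$ in reverse topological order eliminates that factor (it still integrates to one in $x_{s}$ whether or not its conditioning set was surgically removed), so $p(x_{t}\mid\mid C=s)=p(x_{t}\mid\mid C=\emptyset)$ whenever $s\notin\textsc{an}(t)$. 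What your approach buys is rigour and generality: it makes the ``only if'' direction a purely structural consequence of the Markov and modularity assumptions, needing neither faithfulness nor even the hypothesis $\rho_{ts}\ne 0$ for that step (your invocation of Lemma \ref{lem:direct-path} is in fact superfluous), and it handles hard and soft interventions uniformly. What the paper's approach buys is brevity and consonance with the surrounding exposition, at the cost of a proof that asserts rather than derives the key invariance; your worry about the hard intervention ``smuggling'' $s$ into the ancestral product is correctly resolved exactly as you suggest, since replacing $p(x_{s}\mid x_{\textsc{pa}(s)})$ by $p(x_{s}\mid c)$ leaves it a density in $x_{s}$ integrating to one.
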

\begin{proof}
By Lemma \ref{lem:direct-path}, the faithfulness and Markov conditions imply that are three options upon the finding of a non-zero correlation $\rho_{st}$. A hard (or soft) intervention will result in a change in the conditional distribution such that $p(x_{t}\mid x_{s}\mid\mid C=s)\ne p(x_{t}\mid x_{s}\mid\mid C=\emptyset)$. By the faithfulness condition, this implies that there must exist a directed path $s\to \cdots \to t$. 
\end{proof}
\begin{lemma}\label{lem:path-prod}
Let $\mathcal{G}$ be a directed acyclic graph that is Markov and faithful (see Appendix \ref{sec:appendix-gm-assumptions}) with respect to distribution $\mathbb{P}$. Assume the random variables corresponding to the nodes in the graph $\mathcal{G}$ have been standardised. Suppose for some pair of nodes $s$ and $t$ that the conditional correlation $\rho_{ts\mid\mid \{\emptyset,s\}}\ne 0$.  Then 
we have that (\ref{eq:tr-criterion}) is necessary but not sufficient to conclude that $s$ is not a parent of $t$.
\end{lemma}
\begin{proof}
We begin with sufficiency. In a directed acyclic graph $\mathcal{G}$, suppose there are two directed paths $s\to \cdots \to t$ between $s$ and $t$. We refer to these paths as $\alpha_{1}$ and $\alpha_{2}$, where $\alpha=(s,v_{2},\ldots,v_{k-1},t)$ of length $k$, for some set of $k-2$ intermediate nodes $v_{j}$. 
We use a result by \citet[][see also \cite{Wright:1934}]{Wright:1921} that characterises a correlation as the sum of all the paths between $s$ and $t$ where each directed path between $s$ and $t$ equals the product of path coefficients $\beta_{ij}$ on that path, in this case of two paths $\rho_{ts\mid\mid \{\emptyset,s\}}=d_{\alpha_{1}}+d_{\alpha_{2}}$, where $d_{\alpha}$ is defined by (\ref{eq:path-coefficient-product}). By Lemma \ref{lem:wiggle-s} only a directed path from $s$ to $t$ will contribute to the conditional correlation, and any common cause $s\leftarrow \cdots \to t$ or common effect $s\to \cdots  \leftarrow t$ will not. 

Suppose additionally that there is a direct connection $s\to t$ with coefficient $\beta_{ts}^{*}$. Then the correlation between $s$ and $t$ is $\rho_{ts\mid\mid \{\emptyset,s\}}=\beta_{ts}^{*}+d_{\alpha_{1}}+d_{\alpha_{2}}$. It is now easily seen that whenever 
\begin{align*}
|\beta_{ts}^{*}+d_{\alpha_{1}}+d_{\alpha_{2}}|< \min\{|\rho_{v_{i}v_{j}\mid\mid \{\emptyset,s\}}|: (v_{i},v_{j})\in \mathcal{P}(s\to t)\}
\end{align*}
where $\mathcal{P}(s\to t)$ is the set of edges of all directed paths from $s$ to $t$, the direct connection $s\to t$ with path coefficient $\beta^{*}_{st}<\rho_{ts\mid\mid \{\emptyset,s\}}$ remains undetected by criterion (\ref{eq:tr-criterion}). Only for values where the reverse holds will criterion (\ref{eq:tr-criterion}) work. It is obvious that this argument remains valid if the number of paths between $s$ and $t$ is increased to any finite $m$.

We continue with necessity. Suppose there is no direct connection $s\to t$ but there is a directed path from $s$ to $t$, $\alpha=(s,v_{2},\ldots,v_{k-1},t)$ of length $k$. Without loss of generality we can assume the variables to be standardised so that $\beta_{ij}=\rho_{ij}$. Then from (\ref{eq:path-coefficient-product}) it follows that $\rho_{ts\mid\mid \{\emptyset,s\}}=\prod_{j} \rho_{v_{j}v_{j-1}\mid\mid\{\emptyset,s\}}$, and so $|\rho_{v_{j-1}v_{j}\mid\mid \{\emptyset,s\}}|\ge |\rho_{ts\mid\mid \{\emptyset,s\}}|$ for all $j$ on the path, giving criterion (\ref{eq:tr-criterion}).
\end{proof}
%

\section{Appendix: Proofs of main results}\label{sec:appendix-proofs}\noindent

\vspace{1em}\noindent\textbf{Proposition \ref{prop:condcor-invpred}}\hspace{.2em}{\em
Assume the Markov assumption \ref{ass:markov} and the faithfulness assumption \ref{ass:faith} are satisfied for a DAG $G$. Suppose the variables in graph $\mathcal{G}$ are measured in contexts $C=\emptyset$ and $C=s$ of equal probability with an intervention on $s$, and the variances of the nodes $s$ and $t$ in $\mathcal{G}$ are the same across contexts. Then the following are equivalent
\begin{itemize}
\item[(i)]  $s$ is marginal invariant predictive for $t$, and
 \item[(ii)] the correlations without and with intervention are equal, i.e., $\rho_{ts\mid\mid \emptyset}=\rho_{ts\mid\mid s}$.
\end{itemize}
As a consequence, if $\rho_{ts\mid\mid \emptyset}=\rho_{ts\mid\mid s}$, then this also equals $\rho_{ts\mid\mid \{\emptyset,s\}}$
}
\begin{proof}
(i) $\to$ (ii) Suppose $s$ is marginal invariant predictive for $t$. Then for contexts $C=\emptyset$ and $C=s$ we have that 
$X_{t}=X_{s}\beta_{ts}^{*}+ \varepsilon$ for some non-zero $\beta^{*}_{ts}$. Because $\text{sd}(X_{u}\mid\mid C=\emptyset)=\text{sd}(X_{u}\mid\mid C=s)$ for nodes $s,t\in V$, we have that 
\begin{align*}\label{eq:equal-cor}
\rho_{ts\mid\mid\emptyset}=\beta^{*}_{ts}\frac{\text{sd}(X_{s}\mid\mid C=\emptyset)}{\text{sd}(X_{t}\mid\mid C=\emptyset)}=\beta^{*}_{ts}\frac{\text{sd}(X_{s}\mid\mid C=s)}{\text{sd}(X_{t}\mid\mid C=s)}=\rho_{ts\mid\mid s}
\end{align*}
This shows that if $\rho_{ts\mid\mid \emptyset}=\rho_{ts\mid\mid s}$ then $s$ is marginal invariant predictive for $t$.

(ii) $\to$ (i) Conversely, suppose that $\rho_{ts\mid\mid \emptyset}=\rho_{ts\mid\mid s}$ and that there are two different coefficients for the two contexts, $\beta_{ts}^*(\emptyset)\ne \beta_{ts}^*(s)$. Then 
\begin{align*}
\rho_{ts\mid\mid\emptyset}=\beta^{*}_{ts}(\emptyset)\frac{\text{sd}(X_{s}\mid\mid C=\emptyset)}{\text{sd}(X_{t}\mid\mid C=\emptyset)}
\end{align*}
and
\begin{align*}
\rho_{ts\mid\mid s}=\beta^{*}_{ts}(s)\frac{\text{sd}(X_{s}\mid\mid C=s)}{\text{sd}(X_{t}\mid\mid C=s)}
\end{align*}
Since we assumed that $\rho_{ts\mid\mid \emptyset}=\rho_{ts\mid\mid s}$ and that $\text{sd}(X_{u}\mid\mid C=\emptyset)=\text{sd}(X_{u}\mid\mid C=s)$ for nodes $s$ and $t$, we obtain that $\beta_{ts}^*(\emptyset)= \beta_{ts}^*(s)$.

Finally, for the consequence of this equivalence, suppose that $\rho_{ts\mid\mid \emptyset}=\rho_{ts\mid\mid s}$. Assume without loss of generality that the variables $X_s$ and $X_t$ are standardized. Then the mean of variable $X_s$ is $\mu_s(C=c)=0$, for contexts $C=\emptyset$ and $C=s$ for any nodes $s$ and $t$. Then the mixture mean $\mu_s(\{\emptyset,s\})=0$ also. And the standard deviation $\text{sd}(X_s\mid\mid C=c)=1$ for $C=\emptyset$ and $C=s$. Then from the conditional correlation in Appendix \ref{sec:appendix-conditional-correlation}, this implies that 
\begin{align*}
\text{cov}(X_s,X_t\mid\mid \{\emptyset,s\})
&=\pi_\emptyset\text{cov}(X_s,X_t\mid\mid C=\emptyset)  + \pi_s \text{cov}(X_s,X_t\mid\mid C=s)
\end{align*}
When assuming $\rho_{ts\mid\mid \emptyset}=\rho_{ts\mid\mid s}$ we immediately obtain that
\begin{align*}
\text{cov}(X_s,X_t\mid\mid \{\emptyset,s\})
&=(\pi_\emptyset + \pi_s) \text{cov}(X_s,X_t\mid\mid C=s)=\text{cov}(X_s,X_t\mid\mid C=s)
\end{align*}
which was to be shown.
\end{proof}
%

\section{Appendix: Path analysis by Wright}\label{sec:appendix-path-analysis}\noindent
We provide some details here on the calculations of the path contributions to the correlation between nodes $s$ and $t$ for the small graphs in Figure \ref{fig:wright-graph}. We have the system of equations 
\begin{align*}
X_{s}&=\varepsilon_{s}				&X_{v}&=\beta_{vs}X_{s}+\varepsilon_{v}\\
X_{w}&=\beta_{ws}X_{s}+\varepsilon_{w}	&X_{t}&=\beta_{tv}X_{v}+\beta_{tw}X_{w}+\varepsilon_{t}
\end{align*}
where the $\varepsilon_{j}$ are assumed to be independent and identical normally distributed with mean 0 and variance 1. Then, each variable $X_{j}$ has mean 0, and 
\begin{align*}
\rho_{ij}=\frac{\mathbb{E}(X_{i}-\mu_{i})(X_{j}-\mu_{j})}{\sqrt{\text{var}(X_{i})\text{var}(X_{j})}} =\mathbb{E}(X_{i}X_{j}) 
\end{align*}
as $\mu_{i}=0$ and $\text{var}(X_{i})=1$ for both $i$ and $j$. So, for the correlation between nodes $s$ and $v$ (see also Figure \ref{fig:wright-graph}(a))
\begin{align*}
\rho_{vs} = \mathbb{E}(X_{s}X_{v})=\mathbb{E}(X_{s}(\beta_{vs}X_{s}+\varepsilon_{v}))
=\beta_{vs}\mathbb{E}(X_{s}^{2})+\mathbb{E}(X_{s}\varepsilon_{v})
=\beta_{vs}
\end{align*}
by the assumptions that $\mathbb{E}(X_{s}^{2})=1$ and $\mathbb{E}(X_{s}\varepsilon_{v})=\mathbb{E}(\varepsilon_s\varepsilon_v)=0$.
Then for the correlation between $s$ and $t$ we have
\begin{align*}
\rho_{ts}=\mathbb{E}(X_{s}X_{t})=\mathbb{E}(X_{s}(\beta_{tv}X_{v}+\beta_{tw}X_{w}+\varepsilon_{t}))
\end{align*}
Because the residuals $\varepsilon_{j}$ are independent normal with mean 0 we obtain $\mathbb{E}(\varepsilon_{j}\varepsilon_{i})=0$ for any $i$ and $j$ (also $i=j$). Subsequently, we fill in for $X_{v}=\beta_{vs}X_{s}+\varepsilon_{v}$ and $X_{w}=\beta_{ws}X_{s}+\varepsilon_{w}$. Then we obtain 
\begin{align*}
\mathbb{E}(X_{s}(\beta_{tv}X_{v}+\beta_{tw}X_{w}+\varepsilon_{t}))=\beta_{tv}\beta_{vs}\mathbb{E}(X_{s}^{2})+\beta_{tw}\beta_{ws}\mathbb{E}(X_{s}^{2})
\end{align*}
Noting that $\mathbb{E}(X_{s}^{2})=1$ by assumption, we obtain the result. 

An alternative formulation for a DAG $\mathcal{G}$ is given in \citet{Shojaie:2010}. Construct an adjacency matrix $A$ such that the coefficient for the edge $s\to t$ is element $A_{ts}$ of the $p\times p$ matrix $A$, such that $A$ is lower diagonal. (This requires that the ordering of the variables be known.) Then $\Lambda=(I_p - A)^{-1}$ is the \textit{influence matrix} for the equation $X=\Lambda Z$, where $Z$ is a $p$-dimensional vector of random variables and $I_p$ is the identity matrix of dimension $p$.


\section{Appendix: Calculations of example in Section \ref{sec:cond-inv-pred}}\label{sec:appendix-calculations-example}\noindent
We consider the graph in Figure \ref{fig:perturb-graph}(a), where we have the following system of equations using the Gaussian  model (\ref{eq:true-model})
\begin{align*}
X_{v} &= \varepsilon_{v}
	&X_{u}&=\beta_{us}X_{s}+\varepsilon_{u}\\
X_{s} &= \beta_{sv}X_{v} + \varepsilon_{s}
	&X_{t}&=\beta_{tu}X_{u}+\varepsilon_{t}
\end{align*}
We assume that the variance of $\varepsilon_j$ is 1, for all $j$.  If we intervene on $s$ we obtain $X_{s}=W$ for a hard intervention and $X_{s}=W+\beta_{sv}X_{v} + \varepsilon_{s}$ for a soft intervention. 
Then we find that if we determine the covariance between $X_{s}$ and $X_{t}$ in the observation context $C=\emptyset$ we obtain (see below for the details of the calculations) 
\begin{align*}
\text{cov}(X_{s},X_{t}\mid\mid C=\emptyset) = \beta_{sv}^{2}\beta_{tu}\beta_{us}+\beta_{tu}\beta_{us}
\end{align*}
When we intervene on node $s$ we obtain using a hard intervention
\begin{align*}
\text{cov}(X_{s},X_{t}\mid\mid C=s) = \beta_{tu}\beta_{us}\sigma^{2}_{w}
\end{align*}
Similarly, for a soft intervention we find
\begin{align*}
\text{cov}(X_{s},X_{t}\mid\mid C=s) = \beta_{tu}\beta_{us}\sigma^{2}_{w} + \beta_{sv}^{2}\beta_{tu}\beta_{us}+\beta_{tu}\beta_{us}
\end{align*}
Hence, the conclusion using the conditional correlation in (\ref{eq:cond-cor-sample}) with either a hard or soft intervention will lead to a non-zero correlation. However, if we condition on $X_{u}=x_{u}$ then, with or without intervention, we obtain a covariance between nodes $s$ and $t$ of 0, implying a correlation of 0. In this example, when conditioning on $X_{u}$ we obtain the conditional distribution $X_{t}\mid x_{u}$, which is $X_{t}-\beta_{tu}x_{u}=\varepsilon_{t}$. And since we assumed that $\varepsilon_{t}$ is uncorrelated with $\varepsilon_{s}$ (in the Gaussian model (\ref{eq:true-model})) we obtain a correlation of $0$. This is shown in Figure \ref{fig:scatterplots-invariant-prediction}(b), where the observational and the interventional scatterplots show a zero correlation. So, conditioning is required for any context $C$ to obtain the parents (direct edges). In contrast, in Figure \ref{fig:perturb-graph}(b) we see that conditioning on $u$ will not affect the influence of $s$ on $t$, and so there we will find that in both contexts $C=\emptyset$ and $C=s$ the conditional distributions $X_{t}\mid x_{u}\mid\mid C=\emptyset$ and $X_{t}\mid x_{u}\mid\mid C=s$ will be different. Hence, $u$ is not sufficient to block the effect of $s$ on $t$ (there could, for instance, be at least two directed paths from $s$ to $t$). In Figure \ref{fig:perturb-graph}(b) we see that $u$ is not between $s$ and $t$, so that $t$ will be affected by $s$ even when conditioning on $u$.  

Here are several details of the analysis for the graphs in Figure \ref{fig:perturb-graph}(a). We use the system of equations corresponding to Figure \ref{fig:perturb-graph}(a), which for the observational context $C=\emptyset$ is
\begin{align*}
X_{v} &= \varepsilon_{v}
	&X_{u}&=\beta_{us}X_{s}+\varepsilon_{u}\\
X_{s} &= \beta_{sv}X_{v} + \varepsilon_{s}
	&X_{t}&=\beta_{tu}X_{u}+\varepsilon_{t}
\end{align*}
And $X_{s} =W$ for a hard intervention and $X_{s}=W+\beta_{sv}X_{v} + \varepsilon_{s}$ for a soft intervention, where $W$ is a random variable with mean $\mu_{w}$ and variance $\sigma^{2}_{w}$. We limit ourselves here to covariance analysis since we need to show that in certain cases the covariance (and hence the correlation) is 0. In the observational context $C=\emptyset$ the covariance between $X_{s}$ and $X_{t}$ is 
\begin{align*}
\text{cov}(X_{s},X_{t}\mid\mid C=\emptyset)&=\mathbb{E}(X_{s}X_{t}\mid\mid C=\emptyset)\\
	&=\mathbb{E}\{(\beta_{sv}\varepsilon_{v}+\varepsilon_{s})(\beta_{tu}\beta_{us}\beta_{sv}\varepsilon_{v}+\beta_{tu}\varepsilon_{u}+\beta_{tu}\beta_{us}\varepsilon_{s})\mid\mid C=\emptyset\}\\
	&=\beta_{tu}\beta_{us}\beta_{sv}^{2} +\beta_{tu}\beta_{us}
\end{align*}
where we assumed that the variance of all $\varepsilon_{j}$ is 1. When conditioning on $X_{u}=x_{u}$ we have that the covariance between $X_{s}$ and $X_{t}$ in context $C=\emptyset$ is 
\begin{align*}
\mathbb{E}(X_{s}(X_{t}-\beta_{tu}x_{u})\mid x_{u}\mid\mid C=\emptyset)=\mathbb{E}((\beta_{sv}\varepsilon_{v}+\varepsilon_{s})\varepsilon_{t}\mid x_{u}\mid\mid C=\emptyset)=0
\end{align*}
which corresponds to the graph in Figure \ref{fig:perturb-graph}(a) where there is no edge $s\to t$. In the hard intervention, including conditioning on $X_{u}=x_{u}$ we find
\begin{align*}
\mathbb{E}(X_{s}(X_{t}-\beta_{tu}x_{u})\mid x_{u}\mid\mid C=s)=\mathbb{E}((W-\mu_{w})\varepsilon_{t}\mid x_{u}\mid\mid C=s)=0
\end{align*}
since $W$ is independent of $\varepsilon_{t}$. Similarly, we find for the soft intervention when conditioning on $X_{u}=x_{u}$
\begin{align*}
\mathbb{E}\{(W-\mu_{w}+\beta_{sv}\varepsilon_{v}+\varepsilon_{s})(\beta_{tu}x_{u}+\varepsilon_{t}-\beta_{tu}x_{u})\mid x_{u}\mid\mid C=s\}=0
\end{align*}
Without conditioning on $X_{u}$ we find for the hard intervention
\begin{align*}
\mathbb{E}\{(W-\mu_{w})(\beta_{tu}\beta_{us}W-\beta_{tu}\beta_{us}\mu_{w}+\beta_{tu}\varepsilon_{u}+\varepsilon_{t})\mid\mid C=s\}=\beta_{tu}\beta_{us}\sigma^{2}_{w}
\end{align*}
And for the soft intervention
\begin{align*}
&\mathbb{E}\{(W-\mu_{w}+\varepsilon_{s})(\beta_{tu}\beta_{us}W-\beta_{tu}\beta_{us}\mu_{w}+\beta_{tu}\beta_{us}\beta_{sv}X_{v}\\
	&\qquad +\beta_{tu}\beta_{us}\varepsilon_{s}+\beta_{tu}\varepsilon_{u}+\varepsilon_{t})\mid\mid C=s\}
	=\beta_{tu}\beta_{us}\sigma^{2}_{w}+\beta_{tu}\beta_{us}\beta_{sv}^{2} + \beta_{tu}\beta_{us}
\end{align*}
This shows that that the correct inference is obtained only when conditioning on $X_{u}=x_{u}$.

\section{Appendix: Unobserved confounds from Section \ref{sec:unobserved}}\label{sec:appendix-unobserved}\noindent
We assume the graph of Figure \ref{fig:hidden-graph}(a). Here node $u$ is unobserved (hidden) and so we work with the marginal distribution over the nodes $s,t$ and a binary variable $C$ that induces a soft intervention such that $\expt(X_{s}\mid\mid C=0)\ne \expt(X_{s}\mid\mid C=1)$. 
The system of linear equations corresponding to the graph in Figure \ref{fig:hidden-graph}(a) is as follows.
\begin{align*}
X_{t} &= \beta_{ts}X_{s}+\beta_{tu}X_{u}+\varepsilon_{t}	&X_{s} &= \beta_{sC} X_{I}+\beta_{su}X_{u}+\varepsilon_{s}\\
X_{u} &=  \varepsilon_{u}				&X_{C} &= 0/1
\end{align*}
If we know node $u$ then we have that the residual of $\varepsilon_{t}=X_{t}-(X_{s},X_{u})\gamma^{*}$ and its support $(X_{s},X_{u})$ are independent, because with the correct coefficient $\gamma^{*}$
\begin{align*}
\text{cov}(X_{t}-(X_{s},X_{u})\gamma^{*},(X_{s},X_{u}))=(I-P_{su})\text{cov}((X_{s},X_{u})\gamma^{*},(X_{s},X_{u}))=0
\end{align*}
where $P_{su}=X(X^{\sf T}X)^{-1}X^{\sf T}$ is the projection matrix on the space of nodes $s$ and $u$, and $X=(X_{s},X_{u})$.
However, when $u$ is unobserved we make the residual orthogonal to $X_{s}$ only, whence $P_{s}=X_{s}(X_{s}^{\sf T}X_{s})^{-1}X_{s}^{\sf T}$ and we obtain 
\begin{align*}
\text{cov}(X_{t}-X_{s}\gamma_{s},(X_{s},X_{u}))=(I-P_{s})\gamma_{s}\sigma^{2}_{u}
\end{align*}
wher $\gamma_{s}$ is the regression coefficient with only $X_{s}$ as predictor.  This is 0 only if $\gamma_{s}=0$ or if $\sigma_{u}^{2}=0$. It follows that when there are unobserved confounders, it is not possible to demand that the residual $X_{t}-X_{s}\gamma^{*}$ and its support $(X_{s},X_{u})$ are independent.

\section{Appendix: Simulation examples \textsf{R} code}\label{sec:appendix-simulations}\noindent

{\small
\begin{verbatim}
# libraries
library(Rgraphviz)
library(pcalg)
library(qgraph)
library(ggplot2)
library(ggExtra)

# general settings
n <- 100 	# nr of samples
p <- 3 		# nr of variables 
set.seed(123)
cols <- c('#0000FF50','#FF000050')

############# model s -> u -> t
graph <- matrix(c(
			0, 1, 0,
			0, 0, 1,
			0, 0, 0
			),ncol=p,byrow=TRUE)
labels <- c("s","u","t")			
rownames(graph) <- colnames(graph) <- labels
qgraph(graph,labels=labels)

# model parameters, beta != 1
betaUS <- 1.8
betaTU <- 0.9

# graph s -> u -> t
B <- matrix(c(
			0, betaUS, 0,
			0, 0,	   betaTU,
			0, 0,	   0
			),ncol=p,byrow=TRUE)

IminB <- diag(1,p) - B

# data generation according to graph
Z <- matrix(rnorm(n*p),ncol=p,nrow=n) # n x p matrix of standard normal variables
X <- Z%*%solve(IminB) # data Markov to graph
Xi <- X
Zi <- Z
Zi[,1] <- rnorm(n) + 2 # intervention on node s same variance 
Xi <- Zi%*%solve(IminB)

# chain graph
# make into dataframe
dX <- rbind(X,Xi)
context <- c(rep('observational',n),rep('interventional',n))
dx <- data.frame(s=dX[,1],u=dX[,2],t=dX[,3],context=context)
# plot
psut <- ggplot(dx, aes(s, t, colour = context)) 
	+ geom_point() 
	+ theme(legend.position="none") 
	+ scale_color_manual(values=cols) 
	+ xlab("node s") 
	+ ylab("node t") 
	+ geom_point(aes(x=mean(X[,1]),y=mean(X[,3])),colour="red",size=3) 
	+ geom_point(aes(x=mean(Xi[,1]),y=mean(Xi[,3])),colour="blue",size=3)
ggMarginal(psut, groupColour = TRUE, groupFill = TRUE) 



# plot conditional (on u) 
Xs.ip <- residuals(lm(X[,1]~X[,2]))
Xis.ip <- residuals(lm(Xi[,1]~Xi[,2]))
Xt.ip <- residuals(lm(X[,3]~X[,2]))
Xit.ip <- residuals(lm(Xi[,3]~Xi[,2]))

# make into dataframe
dXr <- rbind(cbind(Xs.ip,Xt.ip),cbind(Xis.ip,Xit.ip))
context <- c(rep('observational',n),rep('interventional',n))
dxs <- data.frame(s=dXr[,1],t=dXr[,2],context=context)
# plot
psut <- ggplot(dxs, aes(s, t, colour = context)) 
	+ geom_point() 
	+ theme(legend.position="none") 
	+ scale_color_manual(values=cols) 
	+ xlab("node s") 
	+ ylab("node t") 
	+ geom_point(aes(x=mean(Xs.ip),y=mean(Xt.ip)),colour="red",size=3) 
	+ geom_point(aes(x=mean(Xis.ip),y=mean(Xit.ip)),colour="blue",size=3)
ggMarginal(psut, groupColour = TRUE, groupFill = TRUE)


# make into dataframe
dXit <- rbind(Xit.ip,Xt.ip)
context <- c(rep('observational',n),rep('interventional',n))
dx <- data.frame(s=dX[,1],u=dX[,2],t=dX[,3],context=context)
# plot
psut <- ggplot(dx, aes(s, t, colour = context)) 
	+ geom_point() 
	+ theme(legend.position="none") 
	+ scale_color_manual(values=cols) 
	+ xlab("node s") 
	+ ylab("node t") 
	+ geom_point(aes(x=mean(s),y=mean(t)),colour="blue",size=3) 
	+ geom_point(aes(x=0,y=0),colour="red",size=3)
ggMarginal(psut, groupColour = TRUE, groupFill = TRUE)


############# common cause model #############
set.seed(123)
# graph s <- u -> t
graph2 <- matrix(c(
			0, 0, 0,
			1, 0, 1,
			0, 0, 0
			),ncol=p,byrow=TRUE)
labels <- c("s","u","t")			
rownames(graph2) <- colnames(graph) <- labels
qgraph(graph2,labels=labels)

# model parameters, beta != 1
betaSU <- 1.8
betaTU <- 0.9

# model s <- u -> t
B2 <- matrix(c(
			0, 0, 0,
			betaSU, 0, betaTU,
			0, 		0, 0
			),ncol=p,byrow=TRUE)

IminB2 <- diag(1,p) - B2

# data generation according to graph s <- u -> t
Z2 <- matrix(rnorm(n*p),ncol=p,nrow=n) # n x p matrix of standard normal variables
X2 <- Z2%*%solve(IminB2) # data Markov to graph
X2i <- X2
X2i[,1] <- rnorm(n) + 2 # intervention  on s same variance

# make into dataframe
dX2 <- rbind(X2,X2i)
context <- c(rep('observational',n),rep('interventional',n))
dx2 <- data.frame(s=dX2[,1],u=dX2[,2],t=dX2[,3],context=context)
# plot
psut2 <- ggplot(dx2, aes(s, t, colour = context)) 
	+ geom_point() + theme(legend.position="none") 
	+ xlab("node s") 
	+ ylab("node t") 
	+ scale_color_manual(values=cols) 
	+ geom_point(aes(x=mean(X2[,1]),y=mean(X2[,3])),colour="red",size=3) 
	+ geom_point(aes(x=mean(X2i[,1]),y=mean(X2i[,3])),colour="blue",size=3)
ggMarginal(psut2, groupColour = TRUE, groupFill = TRUE)
\end{verbatim}
}

\end{document}